\DeclareMathAlphabet{\mathbbmsl}{U}{bbm}{m}{sl}
\newcommand{\T}{\mathbbmsl{T}}
\newcommand{\F}{\mathbbmsl{F}}
\newcommand{\G}{\mathbbmsl{G}}
\newcommand{\II}{\mathbbmsl{I}}
\newcommand{\SSS}{\mathbbmsl{S}}
\newcommand{\U}{\mathbbmsl{U}}
\newcommand{\C}{\mathbbmsl{C}}
\newcommand{\V}{\textit{Node}}
\newcommand{\vv}{\textit{node}}
\newcommand{\N}{\textit{Nbr}}
\newcommand{\nn}{\textit{nbr}}
\newcommand{\D}{\textit{Offspring}}
\newcommand{\dd}{\textit{offspring}}
\newcommand{\xx}{\textit{hook}}
\newtheorem{claim}{Claim}
\newcounter{pcounter}
\newtheorem{lemma}{Lemma}[section]
\newtheorem{theorem}{Theorem}[section]
\newtheorem{corollary}{Corollary}[section]
\title{Linear-Time Compression of Bounded-Genus Graphs into
Information-Theoretically Optimal Number of Bits\thanks{Accepted to {\em SIAM Journal on Computing}. A 
preliminary version appeared in SODA~\cite{Lu02}.}}
\author{Hsueh-I Lu\thanks{Department of Computer Science and
    Information Engineering, National Taiwan University.  Address: 1
    Roosevelt Road, Section 4, Taipei 106, Taiwan, ROC.  Web:
    www.csie.ntu.edu.tw/$\sim$hil/.  Email: hil@csie.ntu.edu.tw.
    Research supported in part by NSC grant 101--2221--E--002--062--MY3.
    The author also holds joint appointments from the Graduate
    Institute of Networking and Multimedia and the Graduate Institute
    of Biomedical Electronics and Bioinformatics, National Taiwan
    University.}
}
\date{January 11, 2014}
\begin{document}
\maketitle 
\begin{abstract}
A {\em compression scheme} $A$ for a class $\G$ of graphs consists of
an encoding algorithm $\textit{Encode}_A$ that computes a binary
string $\textit{Code}_A(G)$ for any given graph $G$ in $\G$ and a
decoding algorithm $\textit{Decode}_A$ that recovers $G$ from
$\textit{Code}_A(G)$.  A compression scheme $A$ for $\G$ is {\em
  optimal} if both $\textit{Encode}_A$ and $\textit{Decode}_A$ run in
linear time and the number of bits of $\textit{Code}_A(G)$ for any
$n$-node graph $G$ in $\G$ is information-theoretically optimal to
within lower-order terms.  Trees and plane triangulations were the
only known nontrivial graph classes that admit optimal compression
schemes.  Based upon Goodrich's separator decomposition for planar
graphs and Djidjev and Venkatesan's planarizers for bounded-genus
graphs, we give an optimal compression scheme for any hereditary
(i.e., closed under taking subgraphs) class $\G$ under the premise
that any $n$-node graph of $\G$ to be encoded comes with a
genus-$o(\frac{n}{\log^2 n})$ embedding.  By Mohar's linear-time
algorithm that embeds a bounded-genus graph on a genus-$O(1)$ surface,
our result implies that any hereditary class of genus-$O(1)$ graphs
admits an optimal compression scheme.
For instance, our result yields the first-known optimal compression
schemes for planar graphs, plane graphs, graphs embedded on genus-$1$
surfaces, graphs with genus $2$ or less, $3$-colorable directed plane
graphs, $4$-outerplanar graphs, and forests with degree at most $5$.
For non-hereditary graph classes, we also give a methodology for
obtaining optimal compression schemes. From this methodology, we give
the first known optimal compression schemes for triangulations of
genus-$O(1)$ surfaces and floorplans.
\end{abstract}

\section{Introduction}
Compact representation of graphs are fundamentally important and
useful in many applications, including representing the meshes in
finite-element analysis, terrain models of GIS, and 3D models of
graphics~\cite{Rossignac99,Rossignac-edgebreaker,RossignacSS01,TaubinR98,SnoeyinkVK97,LopesTRSS02,SzymczakKR01,IsenburgS01}, VLSI
design~\cite{SahaS09,KrivogradTZ08}, designing compact routing tables
of computer
networks~\cite{Thorup04,GavoilleH99,Lu10,Peleg00,Gavoille00,ThorupZ01,AbrahamMR09,Chechik11,AgarwalGH11,GavoilleS11},
and compressing the link structure of the
Internet~\cite{BroderKMPRSTW00,AdlerM01,SuelY01,AsanoMN09,AnhM10,ClaudeN10}.
Let $\G$ be a class of graphs.  Let $\textit{num}(\G,n)$ denote the
number of distinct $n$-node graphs in $\G$.  The
information-theoretically optimal number of bits to encode an $n$-node
graph in $\G$ is $\lceil\log\textit{num}(\G,n)\rceil$.\footnote{All
  logarithms throughout the paper are to the base of two.}  For
instance, if $\G$ is the class of rooted trees, then
$\textit{num}(\G,n)\approx \frac{2^{2n}}{n^{3/2}}$ and
$\log\textit{num}(\G,n)=2n-O(\log n)$; if $\G$ is the class of plane
triangulations, then $\log\textit{num}(\G,n)=\log \frac{256}{27} n +
o(n)\approx 3.2451n+o(n)$~\cite{Tutte62}.
A {\em compression scheme} $A$ for $\G$ consists of an encoding
algorithm $\textit{Encode}_A$ that computes a binary string
$\textit{Code}_A(G)$ for any given graph $G$ in $\G$ and a decoding
algorithm $\textit{Decode}_A$ that recovers graph $G$ from
$\textit{Code}_A(G)$.  A compression scheme $A$ for a graph class $\G$
with $\log \textit{num}(\G,n)=O(n)$ is {\em optimal} if the following
three conditions hold.
\begin{enumerate}[\em {Condition}~C1:]
\addtolength{\itemsep}{-0.5\baselineskip}
\item 
\label{condition:c1}
The running time of algorithm $\textit{Encode}_A(G)$ is linear in the size of
$G$.

\item 
\label{condition:c2}
The running time of algorithm $\textit{Decode}_A(\textit{Code}_A(G))$ is linear
in the bit count of $\textit{Code}_A(G)$.

\item 
\label{condition:c3}
For all positive constants $\beta$ with $\log\textit{num}(\G,n)\leq
\beta n+o(n)$, the bit count of $\textit{Code}_A(G)$ for an $n$-node
graph $G$ in $\G$ is no more than $\beta n+o(n)$.
\end{enumerate}
Condition~C\ref{condition:c3} basically says that the bit count of
$\textit{Code}_A(G)$ is information-theoretically optimal to within
lower-order terms.  Although there has been considerable work on
compression schemes, trees (see
e.g.,~\cite{MunroR01,Jacobson89,LuY08,BenoitDMRRR05}) and plane
triangulations~\cite{PoulalhonS06} were the only known nontrivial
graph classes that admit optimal compression schemes.
A graph class is {\em hereditary} if it is closed under taking
subgraphs.  Below is the main result of the paper.
\begin{theorem}
\label{theorem:theorem1}
Any hereditary class $\G$ of graphs with 
$\log \textit{num}(\G,n)=O(n)$ admits an optimal compression scheme, as long
as each input $n$-node graph in $\G$ to be encoded comes with a
genus-$o(\frac{n}{\log^2 n})$ embedding.
\end{theorem}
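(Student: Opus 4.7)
The plan is to combine three ingredients: Djidjev and Venkatesan's planarizer (to reduce the genus-$o(n/\log^2 n)$ case to the planar case), Goodrich's separator decomposition (to break the resulting planar graph into small pieces that can be table-indexed), and the hereditary hypothesis together with the cardinality bound $\log \textit{num}(\G,n) \leq \beta n + o(n)$, which ensures that each piece itself lies in $\G$ and therefore costs at most $\beta k + o(k)$ bits to encode via a lookup table.

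First, I would apply the Djidjev--Venkatesan planarizer to the input embedding, obtaining a set $E_0$ of $O(\sqrt{\genus n}) = o(n/\log n)$ edges whose deletion yields a planar graph $G'$; the endpoints of $E_0$ are stored separately at a cost of $O(|E_0| \log n) = o(n)$ bits. Second, I would use Goodrich's decomposition to recursively split $G'$ along planar separators down to pieces of size at most $b = \Theta(\log n)$; by heredity every such piece lies in $\G$. Third, I would precompute a lookup table listing all $\G$-graphs on at most $b$ nodes, each annotated with a canonical vertex order. Since the table has at most $2^{\beta b + o(b)} = n^{O(1)}$ entries it occupies $o(n)$ bits and can be built in sublinear time; each piece is then encoded by its table index in $\beta b + o(b)$ bits, summing to $\beta n + o(n)$ over all pieces. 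Fourth, I would encode the inter-piece skeleton---the decomposition tree together with the identities of the boundary vertices shared across adjacent pieces---using a succinct representation of the separator hierarchy.

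The main obstacle will be Step~4, the skeleton encoding: I must ensure that the total bit count for the decomposition tree, the boundary-vertex identifications, and the endpoints of $E_0$ all sum to $o(n)$, so that the $\beta n$ bits coming from the piece indices remain the dominant term. This hinges on two properties of Goodrich's decomposition: the total separator size is $O(n/\sqrt{b}) = o(n/\log n)$, and each vertex participates as a separator vertex at only $O(1)$ levels, so that $O(\log n)$-bit identifiers for boundary vertices are paid for by only $o(n/\log n)$ boundary occurrences. Once the skeleton overhead is controlled, Conditions~C\ref{condition:c1}--C\ref{condition:c3} follow from the linear running times of the planarizer, the separator decomposition, and the lookup-table operations used during both encoding and decoding; decoding reverses the four steps in linear time by consulting the same precomputed table.
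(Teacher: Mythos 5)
Your high-level strategy---planarize via Djidjev--Venkatesan, separator-decompose via Goodrich, table-encode small pieces using heredity---matches the paper's, but the single-level version you propose breaks in two places, both arithmetic.

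First, with $b=\Theta(\log n)$, Goodrich's recursion down to pieces of size $b$ yields total separator size $\Theta(n/\sqrt{b})=\Theta(n/\sqrt{\log n})$, which is $\omega(n/\log n)$, not $o(n/\log n)$ as you claim. Combined with the $O(\log n)$-bit global identifiers you assign to boundary vertices, the skeleton alone costs $\Theta(n\sqrt{\log n})=\omega(n)$ bits, which already destroys Condition~C3. Switching to local (within-piece) identifiers for the endpoint indices does not rescue this, because a cross-piece edge still needs $\Theta(\log(n/b))=\Theta(\log n)$ bits merely to name which of the $\Theta(n/b)$ pieces it touches; there are $\Theta(n/\sqrt{b})$ such edges, so the overhead stays $\omega(n)$ unless $b=\omega(\log^2 n)$.

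Second, even ignoring the skeleton, the lookup table at $b=\Theta(\log n)$ is too large: the number of distinct $\G$-graphs on $\leq b$ nodes is $2^{\beta b+o(b)}=n^{\beta+o(1)}$, and each table entry must store at least $\Omega(b)=\Omega(\log n)$ bits (to recover the graph and its labeling), so the table occupies $\Omega(n^{\beta}\log n)$ bits, which is not $o(n)$ for any $\beta\geq 1$ and cannot be built in $o(n)$ (let alone sublinear) time. Enlarging $b$ to beat the skeleton overhead only makes the table worse. This tension---pieces must be $\omega(\log^2 n)$ to control the piece-index cost but $o(\log n)$ for the table to be small---is precisely why the paper interposes \emph{two} levels of separation $\SSS_0\to\SSS_1\to\SSS_2$ (Lemma~\ref{lemma:lemma3.1}), driving piece sizes to $\text{poly}(\log\log n)$ where $O(1)^{\text{poly}(\ell)}=o(n)$ holds (Lemma~\ref{lemma:basis}), while paying only $O(\ell_k)$ bits per boundary occurrence at level $k$ because piece indices and vertex labels are encoded \emph{relative to the parent piece} (Properties~S\ref{separation:3}--S\ref{separation:5} together with the recovery string of Lemma~\ref{lemma:recovery}). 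Without the hierarchy and local labeling, your Step~4 cannot be made to sum to $o(n)$.
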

\noindent
By Theorem~\ref{theorem:theorem1} and Mohar's linear-time genus-$O(1)$
embedding algorithm for genus-$O(1)$ graphs~\cite{Mohar99,KawarabayashiMR08}
(see Lemma~\ref{lemma:mohar}), any hereditary class of genus-$O(1)$
graphs admits an optimal compression scheme.
For instance, our result yields the first-known optimal compression
schemes for planar graphs, plane graphs, graphs embedded on genus-$1$
surfaces, graphs with genus $2$ or less, $3$-colorable directed plane
graphs, $4$-outerplanar graphs, and forests with degree at most $5$.
For non-hereditary graph classes, we also give an extension (see
Corollary~\ref{corollary:corollary1}) of
Theorem~\ref{theorem:theorem1}.  As summarized in the following
theorem, we show 
two classes of genus-$O(1)$ graphs whose optimal
compression schemes are obtainable via this extension, where the class
of floorplans is defined in related work below.
\begin{theorem}
\label{theorem:theorem2}
The following
classes of graphs admit optimal compression schemes:
\begin{enumerate}
\addtolength{\itemsep}{-0.5\baselineskip}
\item
\label{thm2:triangulation}
Triangulations of a genus-$g$ surface for any integral constant $g$.
\item 
\label{thm2:floorplan}
Floorplans.
\end{enumerate}
\end{theorem}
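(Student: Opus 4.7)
Both parts of this theorem concern non-hereditary graph classes, so we cannot appeal to Theorem~\ref{theorem:theorem1} directly; the plan is to invoke Corollary~\ref{corollary:corollary1}, whose role is to reduce compression for a class $\G$ to compression for a hereditary ``cover'' $\G'\supseteq\G$ of bounded genus, supplemented by $o(n)$ bits of side information that pin down $\G$ inside $\G'$.

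For part~\ref{thm2:triangulation}, let $\G$ be the class of triangulations of a genus-$g$ surface. By Euler's formula, an $n$-node member of $\G$ has exactly $3(n+2g-2)$ edges, so its underlying graph is a genus-$g$ graph; moreover the triangulation itself supplies a genus-$g$ embedding in constant time per face, which satisfies the hypothesis of Theorem~\ref{theorem:theorem1}. The strategy is to take $\G'$ to be the hereditary class of genus-$g$ graphs and have Corollary~\ref{corollary:corollary1} deliver an encoding of the underlying embedded graph of length $\log\textit{num}(\G',n)+o(n)$ bits; a small amount of extra data (at most $O(\log n)$ bits, since on a fixed genus-$g$ surface a $3$-connected embedded graph admits only a bounded number of combinatorially distinct triangulating completions) suffices to recover the triangulation. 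The main obligation is to show that $\log\textit{num}(\G',n)=\log\textit{num}(\G,n)+o(n)$, thereby ruling out any leading-constant inflation caused by passing to the cover class.

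For part~\ref{thm2:floorplan}, one first translates a floorplan with $n$ inner rectangles into an embedded plane graph of bounded degree in the standard way: the vertices are the rectangle corners and $T$-junctions, the edges are the maximal horizontal/vertical segments between them, and an $O(1)$-bit label at each $T$-junction records its orientation. Take $\G'$ to be the class of plane graphs of bounded degree. Then Theorem~\ref{theorem:theorem1} encodes the plane graph in $\log\textit{num}(\G',n)+o(n)$ bits, the $O(n)$ orientation labels are appended in linear time, and Corollary~\ref{corollary:corollary1} packages the two pieces into a single optimal code. The delicate step is to verify that this two-stage encoding does not exceed $\log\textit{num}(\G_{\text{floor}},n)+o(n)$, which requires matching the asymptotic count of floorplans against that of labeled bounded-degree plane graphs, for instance by exhibiting a polynomially-bounded-to-one map between the two classes.

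The main obstacle in both parts is identical: the extension step must not waste any of the constant in front of $n$. Conditions~C\ref{condition:c1} and~C\ref{condition:c2} are inherited from Theorem~\ref{theorem:theorem1} together with the constant-time-per-node auxiliary data, but Condition~C\ref{condition:c3} forces tight combinatorial bookkeeping---counting embeddings for part~\ref{thm2:triangulation} and counting orientation/segment patterns for part~\ref{thm2:floorplan}---so that the gap between $\log\textit{num}(\G',n)$ and $\log\textit{num}(\G,n)$ is absorbed into the $o(n)$ slack that Corollary~\ref{corollary:corollary1} allows.
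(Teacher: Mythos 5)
Your proposal rests on a misreading of Corollary~\ref{corollary:corollary1}. That corollary does \emph{not} reduce compression for a non-hereditary class $\G$ to compression for a hereditary cover class $\G'\supseteq\G$. It keeps $\G$ itself as the encoding class: the precomputation table $\textit{Table}(\G,\ell)$ is built for $\G$, each patched graph $H_i$ is required to lie in $\G$, and $H_i$ is encoded by $\textit{Optcode}(H_i)$, which has exactly $\lceil\log\textit{num}(\G,\vv(H_i))\rceil$ bits. The hypothesis of the corollary is precisely that each local piece $G(V_i)$ (which may fail to be in $\G$ because $\G$ is not hereditary) can be \emph{augmented} by $O(\nn_G(V_i))$ nodes and edges into some $H_i\in\G$; the difference is then stored in a separate $o(n)$-bit string $\textit{Fix}$. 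This is what the paper does: for triangulations, it inserts a new vertex in each non-triangle face of $\Delta[V_i]$ to obtain a triangulation $H_i$, with the count of added nodes bounded by $O(\nn_\Delta(V_i))$; for floorplans, it similarly pads $G(V_i)$ with $O(\nn_G(V_i))$ nodes and edges to obtain a floorplan $H_i$. At no point does the argument encode against a larger hereditary class.

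The step you identify as your ``main obligation''---showing $\log\textit{num}(\G',n)=\log\textit{num}(\G,n)+o(n)$ for $\G'$ the class of genus-$g$ graphs (respectively bounded-degree plane graphs) and $\G$ the class of triangulations (respectively floorplans)---is in fact false, and this is fatal. For $g=0$, planar graphs have per-node entropy roughly $4.9$ bits, whereas plane triangulations need only $\log_2(256/27)\approx 3.25$ bits per node; the gap is $\Theta(n)$, not $o(n)$. The same is true for bounded-degree plane graphs versus floorplans, for which Takahashi et~al.\ show the entropy is at most $\log_2 3.375\approx 1.755$ bits per node. Encoding against the cover class would therefore violate Condition~C\ref{condition:c3} by a linear number of bits, and there is no polynomially-bounded-to-one correspondence between these classes that could repair the count. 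The essential device you are missing is that the corollary lets one encode each local patch directly in the optimal code of $\G$ itself (by first completing it into a member of $\G$), so the cover class, and hence the entropy comparison you propose, never enters the argument.
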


\paragraph{Technical overview}
The kernel of the proof of Theorem~\ref{theorem:theorem1} is a
linear-time disjoint partition $G_0,\ldots,G_p$ of an $n$-node graph
$G$ embedded on a genus-$o(\frac{n}{\log^2 n})$
surface.\footnote{Precisely, the disjoint partition $G_0,\ldots,G_p$
  of the edges of the embedded graph $G$ in the proof of
  Theorem~\ref{theorem:theorem1} is $G[V_0], G(V_1), \ldots,G(V_p)$,
  where $[V_0,\ldots,V_p]$ is both (i) a $1$-separation $\SSS_1$ of an
  arbitrary triangulation $\Delta$ of $G$ and (ii) a refinement of the
  $0$-separation $\SSS_0=[\varnothing,\V(\Delta)]$ of $\Delta$.}  Let
$\text{poly}(n)$ denote $O(n^{O(1)})$.  Based upon Goodrich's
separator decomposition of planar graphs~\cite{Goodrich95} and Djidjev
and Venkatesan's planarizer~\cite{DjidjevV95}, partition
$G_0,\ldots,G_p$ satisfies the following conditions, where $n_i$ is
the number of nodes of $G_i$ and $d_i$ is the number of times that the
nodes of $G_i$ are duplicated in some $G_j$ with $j\ne i$:\footnote{As
  a matter of fact, in our construction, all duplicated nodes of $G_i$
  with $i\geq 1$ belong to $G_0$.}  (a)~$n_0 = o(\frac{n}{\log n})$,
(b)~$n_i=\text{poly}(\log n)$ holds for each $i=1,2,\ldots,p$,
(c)~$\sum_{i=1}^p d_i=o(\frac{n}{\log n})$, and~(d)~$\sum_{i=0}^p
n_i=n+o(\frac{n}{\log n})$.  By Condition~(a), $G_0$ can be encoded in
$o(n)$ bits. By Conditions~(b) and (c), the information required to
recover $G$ from $G_0,G_1,\ldots,G_p$ can be encoded into $o(n)$ bits
(see Lemma~\ref{lemma:recovery}).  By Condition~(d), we have
$\log\textit{num}(\G,n)\leq o(n)+\sum_{i=1}^p
\log\textit{num}(\G,n_i)$.  Therefore, the disjoint partition reduces
the problem of encoding an $n$-node graph in $\G$ to the problem of
encoding a $\text{poly}(\log n)$-node graph in $\G$.  Applying such a
reduction for one more level, it remains to encode a
$\text{poly}(\log\log n)$-node graph in $\G$ into an
information-theoretically optimal number of bits, which can be
resolved by the standard technique~(see,
e.g.,~\cite{HeKL00,MunroR01,PettieR02}) of precomputation tables (see
Lemma~\ref{lemma:basis}).

\paragraph{Related work}
The compression scheme of Tur\'{a}n~\cite{turan84} encodes an $n$-node
plane graph that may have self-loops into $12n$ bits.\footnote{For
  brevity, we omit all lower-order terms of bit counts in our
  discussion of related work.}  Keeler and Westbrook~\cite{KeelerW95}
improved this bit count to $10.74n$.  They also gave compression
schemes for several families of plane graphs. In particular, they used
$4.62n$ bits for plane triangulation, and $9n$ bits for connected
plane graphs free of self-loops and degree-one nodes.  For plane
triangulations, He et~al.~\cite{HeKL99} improved the bit count to
$4n$.  For triconnected plane graphs, He et~al.~\cite{HeKL99} also
improved the bit count to at most $8.585n$ bits.  This bit count was
later reduced to at most $\frac{9\log_2 3}{2}n\approx 7.134n$ by
Chuang et~al.~\cite{ChuangGHKL98}.  For any given $n$-node graph $G$
embedded on a genus-$g$ surface, Deo and Litow~\cite{DeoL98} showed an
an $O(ng)$-bit
encoding for $G$. These compression schemes all take linear time for
encoding and decoding, but Condition~C\ref{condition:c3} does not hold
for them.  The compression schemes of He et~al.~\cite{HeKL00}
(respectively, Blelloch et~al.~\cite{BlellochF10}) for planar graphs,
plane graphs, and plane triangulations (respectively, separable
graphs) satisfies Condition~C\ref{condition:c3}, but their encoding
algorithms require $\Omega(n\log n)$ time on $n$-node graphs.

\begin{figure}[t]
\centerline{\input{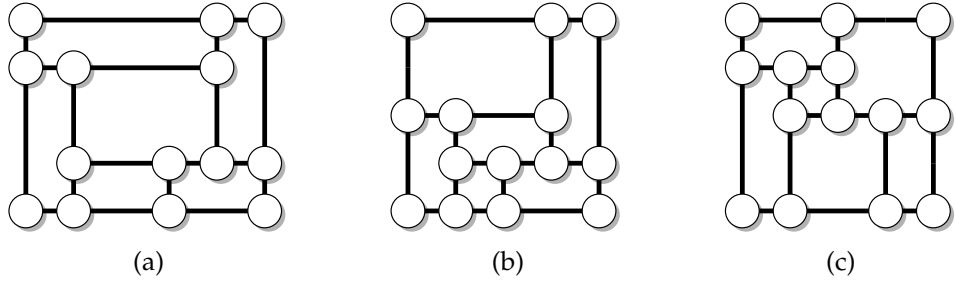}}
\caption{Three floorplans with $14$ nodes, $6$ internal faces, and
  $19$ edges. Floorplans~(a) and~(b) are equivalent, floorplans~(b)
  and~(c) are not equivalent.}
\label{figure:floorplan}
\end{figure}

Floorplanning is a fundamental issue in circuit
layout~\cite{YaoCCG03,GuoTCY01,MoffittRMP08,LinC05,Kajitani08,Young08,FengM06,BanerjeeSB09,ChenC06,LiSC10,DasguptaS01,LaiL88,TTSS91,MMH82,SahaS09,AgnihotriOM10}.
Motivated by VLSI physical design, various representations of
floorplans were proposed~\cite{ZhuangSJK03,ZhaoLKS04,FujimakiT07}.
Designing a floorplan to meet a certain criterion is NP-complete in
general~\cite{L83,H88,WKC88}, so heuristic techniques such as
simulated annealing~\cite{WL89,WLL88,ChenC06} are practically
useful. The length of the encoding affects the size of the search
space.  A {\em floorplan}, which is also known as {\em rectangular
  drawing}, is a division of a rectangle into rectangular faces using
horizontal and vertical line segments.  Two floorplans are {\em
  equivalent} if they have the same adjacency relations and relative
positions among the nodes.  For instance,
Figure~\ref{figure:floorplan} shows three floorplans: Floorplans (a)
and (b) are equivalent. Floorplans (b) and (c) are not equivalent.
Let $G$ be the input $n$-node floorplan.  Under the conventional
assumption that each node of $G$, other than the four corner nodes,
has exactly three neighbors (see, e.g.,~\cite{He99,YS93}), one can
verify that $G$ has $0.5n$ faces and $1.5n-2$ edges.  Yamanaka and
Nakano~\cite{YN06} showed how to encode $G$ into $2.5n$ bits.
Chuang~\cite{Chuang08} reduced the bit count to $2.293n$.  Takahashi
et~al.~\cite{TakahashiFI09} further reduced bit count to $2n$.  All
these compression schemes for floorplans satisfy
Conditions~C\ref{condition:c1} and~C\ref{condition:c2}, but not
Condition~C\ref{condition:c3}.  Takahashi et~al.~\cite{TakahashiFI09}
also showed that the number of distinct $n$-node floorplans is no more
than $3.375^{n+o(n)}\approx 2^{1.755n+o(n)}$. Therefore, our
Theorem~\ref{theorem:theorem2}(\ref{thm2:floorplan}) encodes an $n$-node
floorplan into at most $1.755n$ bits.

For applications that require query support,
Jacobson~\cite{Jacobson89} gave a $\Theta(n)$-bit encoding for a
connected and simple planar graph $G$ that supports traversal in
$\Theta(\log{n})$ time per node visited. Munro and
Raman~\cite{MunroR97} improved this result and gave schemes to encode
binary trees, rooted ordered trees, and planar graphs. For a general
$n$-node $m$-edge planar graph $G$, they used $2m+8n$ bits while
supporting adjacency and degree queries in $O(1)$ time. Chuang
et~al.~\cite{ChuangGHKL98} reduced this bit count to
$2m+(5+\frac{1}{k})n$ for any constant $k>0$ with the same query
support. The bit count can be further reduced if only $O(1)$-time
adjacency queries are supported, or if $G$ is simple, triconnected or
triangulated~\cite{ChuangGHKL98}.  Chiang et~al.~\cite{ChiangLL05}
reduced the number of bits to $2m+2n$.  Yamanaka and
Nakano~\cite{YamanakaN10} showed a $6n$-bit encoding for plane
triangulations with query support.  The succinct encodings of Blandford
et~al.~\cite{BlandfordBK03} and Blelloch et~al.~\cite{BlellochF10} for
separable graphs support queries.  Yamanaka et~al.~\cite{YamanakaN08}
also gave a compression scheme for floorplans with query support.
For labeled planar graphs, Itai and Rodeh~\cite{ItaiR82} gave an
encoding of $\frac{3}{2} n \log n$ bits.  For unlabeled general
graphs, Naor~\cite{Naor90} gave an encoding of ${\frac{1}{2}}n^2$
bits.
For certain graph families, Kannan et~al.~\cite{KNR92} gave schemes
that encode each node with $O(\log n)$ bits and support $O(\log
n)$-time testing of adjacency between two nodes. 
Galperin and Wigderson~\cite{GW83}
and Papadimitriou and Yannakakis~\cite{PH86.encode} investigated
complexity issues arising from encoding a graph by a small circuit
that computes its adjacency matrix.
Related work on various versions of succinct graph representations can
be found
in~\cite{MunroRS01,ArikatiMZ97,FederM95,GrossiL98,GavoilleH:k-page,Patrascu08,SadakaneN10,FarzanRR09,FarzanM08a,FarzanM08b,BarbayAHM07,kaot93.joa}
and the references therein.

\paragraph{Outline}
The rest of the paper is organized as follows.
Section~\ref{section:prelim} gives the preliminaries.
Section~\ref{section:separation} shows our algorithm for computing
graph separations.  Section~\ref{section:compression} gives our
optimal compression scheme for hereditary graph classes.
Section~\ref{section:extend} shows a methodology for obtaining optimal
compression schemes for non-hereditary graph classes and applies this
methodology on triangulations of genus-$O(1)$ graphs
and floorplans.  Section~\ref{section:conclude} concludes the paper
with a couple of open questions.

\section{Preliminaries}
\label{section:prelim}

\subsection{Segmentation prefix}
Let $\|X\|$ denote the number of bits of binary string $X$.  A binary
string $X_0$ is a {\em segmentation prefix} of binary strings
$X_1,\ldots,X_d$ if (a) it takes $O(\sum_{i=1}^d \|X_i\|)$ time to
compute $X_0$ from $X_1,\ldots,X_d$ and (b) given the concatenation of
$X_0,X_1,\ldots,X_d$, it takes $O(\sum_{i=0}^d \|X_i\|)$ time to
recover all $X_i$ with $1\leq i\leq d$.

\begin{lemma}[See, e.g.,~\cite{BellCW90,Elias75}]
\label{lemma:concat}
Any binary strings $X_1,\ldots,X_d$ with $d=O(1)$ have a segmentation
prefix with $O(\log \sum_{i=1}^d \|X_i\|)$ bits.
\end{lemma}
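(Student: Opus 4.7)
My plan is to build $X_0$ from a fixed self-delimiting integer code for the lengths $\|X_1\|,\ldots,\|X_{d-1}\|$. The length $\|X_d\|$ is recoverable from the total length of the concatenation once the first $d-1$ substrings have been peeled off, so encoding only the first $d-1$ lengths suffices; in fact for simplicity I would encode all $d$ of them, since $d=O(1)$.

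Concretely, let $N=\sum_{i=1}^{d}\|X_i\|$, so each individual length $\|X_i\|$ is at most $N$. I would use Elias's $\delta$-code (or equivalently $\gamma$-code), which represents any positive integer $m$ by a self-delimiting bit string of length $O(\log m)$, and whose encoder and decoder both run in time proportional to the code length on a standard RAM. Setting
\[
X_0 \;=\; \delta(\|X_1\|)\,\delta(\|X_2\|)\cdots\delta(\|X_d\|),
\]
I obtain $\|X_0\|=\sum_{i=1}^d O(\log\|X_i\|)=O(\log N)$, using $d=O(1)$ and the concavity of $\log$ (or just the trivial bound $\log\|X_i\|\le \log N$).

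For condition~(a), computing each $\delta(\|X_i\|)$ takes $O(\log\|X_i\|)$ time once $\|X_i\|$ is known, and $\|X_i\|$ itself can be read off in $O(\|X_i\|)$ time, so the total encoding time is $O(\sum_{i=1}^d\|X_i\|)$. For condition~(b), given the concatenation $X_0 X_1\cdots X_d$, the decoder scans the prefix $X_0$ from the left, using the self-delimiting property of $\delta$ to extract the integers $\|X_1\|,\ldots,\|X_d\|$ one by one in total time $O(\|X_0\|)=O(\log N)$; it then slices the remaining suffix into the $X_i$'s at the indicated boundaries in additional $O(N)$ time, for a total of $O(\sum_{i=0}^d\|X_i\|)$.

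There is no real obstacle: the only point worth checking is that $d=O(1)$ is genuinely used, both to absorb the $d$ copies of $O(\log N)$ into a single $O(\log N)$ bound on $\|X_0\|$ and to keep the encoding/decoding time within the stated linear budgets. Any standard reference to Elias-style universal codes (as in~\cite{Elias75,BellCW90}) supplies the needed self-delimiting integer encoding; no new machinery is required.
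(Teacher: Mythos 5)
The paper does not prove this lemma itself but simply cites Elias~\cite{Elias75} and Bell et al.~\cite{BellCW90}; your argument via a self-delimiting Elias code for the lengths $\|X_1\|,\ldots,\|X_d\|$, prepended to the concatenation, is exactly the standard construction those references supply. Your proof is correct (modulo the routine conventions that the decoder knows the constant $d$ and that one shifts by one to handle a possible zero length), so it takes essentially the same approach as the paper intends.
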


\begin{lemma}
\label{lemma:unary-concat}
Any binary strings $X_1,\ldots,X_d$ have an $O(\min\{m,\ d\log
m\})$-bit segmentation prefix, where $m=\|X_1\|+\cdots+\|X_d\|$.
\end{lemma}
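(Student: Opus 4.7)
The plan is to present two candidate encodings of the length vector $(\|X_1\|,\ldots,\|X_d\|)$ and let $X_0$ carry whichever candidate is shorter, preceded by a one-bit selector and a self-delimiting header recording $d$. Since Lemma~\ref{lemma:concat} already supplies a prefix for gluing any $O(1)$ strings in $O(\log m)$ bits, the selector, the header, and the outer glue collectively cost only $O(\log m)$ bits, which are absorbed into the final bound.

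For the $O(d\log m)$ candidate, I would encode each $\|X_i\|$ with a self-delimiting integer code such as Elias~$\delta$, which represents a non-negative integer $\ell$ in $O(1+\log(\ell+1))$ bits, and concatenate the codes in input order. The total is $\sum_{i=1}^d O(1+\log(\|X_i\|+1))$, and by concavity of $\log$ this evaluates to $O(d\log(1+m/d)) = O(d\log m)$ whenever $d\le m$.

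For the $O(m)$ candidate, I would encode each $\|X_i\|$ in unary, writing $\|X_i\|$ zeros followed by a delimiting one, yielding $m+d$ bits total. The definition of a segmentation prefix itself forces $d=O(m)$: the encoder must finish in $O(\sum_i\|X_i\|)=O(m)$ time yet must at least touch each of the $d$ input strings, so $d\le cm$ for some constant $c$, which makes the unary candidate occupy $O(m)$ bits.

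Assembly and decoding are then straightforward. I would compute both candidates in $O(m)$ time, let $W$ be the shorter one, and set $X_0$ to be the Lemma~\ref{lemma:concat} concatenation of a self-delimiting encoding of $d$, the one-bit selector, and $W$; this satisfies $\|X_0\|=O(\min\{m,d\log m\})$ and is built in $O(m)$ time, giving condition~(a). To decode, one reads $d$ and the selector via Lemma~\ref{lemma:concat}, scans $W$ to recover the length sequence (both schemes are prefix-free and parse left-to-right in time $O(\|W\|)$), and then slices $X_1,\ldots,X_d$ off the remaining suffix in an additional $O(m)$ steps, giving condition~(b). The main obstacle, to the extent there is one, is bookkeeping: verifying that the Elias sum telescopes to $O(d\log m)$ via concavity under $d\le m$, and confirming that the auxiliary $O(\log m)$ overhead from the header, the selector, and Lemma~\ref{lemma:concat} is dominated by $\min\{m,d\log m\}$ in the nondegenerate regime $d\ge 1$, $m\ge 2$.
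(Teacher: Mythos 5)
Your proposal is correct and takes essentially the same route as the paper: both constructions produce two candidate encodings of the length sequence, pick the one matching the smaller of $m$ and $d\log m$, and then glue the chosen encoding to the concatenation $X_1\cdots X_d$ using the $O(1)$-string segmentation prefix of Lemma~\ref{lemma:concat}. The paper's concrete choices differ only in detail: for the $O(m)$ regime it uses an $m$-bit bitmap with ones at the cumulative positions $\|X_1\|+\cdots+\|X_i\|$ (rather than your unary $\|X_i\|$-zeros-then-a-one, which is $m+d$ bits and hence equivalent when $d\le m$), and for the $O(d\log m)$ regime it stores the $d$ cumulative sums at a fixed width of $O(\log m)$ bits each (rather than Elias~$\delta$ codes of individual lengths, which buys the slightly tighter $O(d\log(1+m/d))$ that the lemma does not need); the paper also decides between the two by directly comparing $m$ with $d\log m$ rather than by computing both and using a selector bit, though your selector is a perfectly reasonable and perhaps cleaner design.

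One small wrinkle: your argument that the \emph{definition} of segmentation prefix forces $d=O(m)$ is not quite right, since that definition constrains the output $X_0$ and its algorithms, not the admissible inputs. The cleaner observation — which the paper also relies on implicitly, since its bitmap needs $d$ distinct $1$-positions inside $m$ bits — is simply that one may assume each $\|X_i\|\ge 1$ (as holds in every use of the lemma in this paper), which gives $d\le m$ outright and makes your unary candidate $m+d=O(m)$ bits.
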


\begin{proof}
Let $X$ be the concatenation of $X_1,\ldots,X_d$.  If $m\leq d\log m$,
let $X'$ be the $m$-bit binary string with exactly $d$ copies of
$1$-bits such that the $j$-th bit of $X'$ is $1$ if and only if
$j=\|X_1\|+\cdots+\|X_i\|$ holds for some $i=1,\ldots,d$.  Otherwise,
let $X'$ store the $O(\log m)$-bit numbers $\|X_1\|+\cdots+\|X_i\|$
for all $i=1,\ldots,d$.  Let $X'_0$ be the segmentation prefix of $X'$
and $X$ as ensured by Lemma~\ref{lemma:concat}.  The concatenation of
$X'_0$ and $X'$ is a segmentation prefix $X_0$ of $X_1,\ldots,X_d$
with $O(\min\{m,\ d\log m\})$ bits.  The lemma is proved.
\end{proof}

\noindent
For the rest of the paper, let $X_1\circ \cdots \circ X_d$ denote the
concatenation of $X_0,X_1,\ldots,X_d$, where $X_0$ is the segmentation
prefix of $X_1,\ldots,X_d$ as ensured by
Lemma~\ref{lemma:unary-concat}.

\subsection{Precomputation table}

Unless clearly stated otherwise, all graphs throughout the paper are
simple, i.e., having no multiple edges or self-loops.  
Let $|S|$ denote the cardinality of set $S$.  Let $\V(G)$ consist of
the nodes in graph $G$ and let $\vv(G)=|\V(G)|$.  For any subset $V$
of $\V(G)$, let $G[V]$ denote the subgraph of $G$ induced by~$V$ and
let $G\setminus V$ denote the subgraph of $G$ obtained by deleting $V$
and their incident edges.  Two disjoint subsets $V$ and $V'$ of
$\V(G)$ are {\em adjacent} in $G$ if there is an edge $(v,v')$ of $G$
with $v\in V$ and $v'\in V'$.  For any subset $V$ of $\V(G)$, let
$\N_G(V)$ consist of the nodes in $\V(G)\setminus V$ that are adjacent
to $V$ in $G$ and let $\nn_G(V)=|\N_G(V)|$.  A {\em connected
  component} of graph $G$ is a maximal subset $C$ of $\V(G)$ such that
$G[C]$ is connected.

\begin{lemma}
\label{lemma:basis}
Let $\G$ be a graph class satisfying $\log \textit{num}(\G,n)=O(n)$.
Given positive integers $\ell$ and~$n$ with $\ell=\text{poly}(\log\log
n)$, it takes overall $o(n)$ time to compute (i) a labeling
$\textit{Label}(H)$ and a $\lceil\log
\textit{num}(\G,\vv(H))\rceil$-bit binary string $\textit{Optcode}(H)$
for each distinct graph $H\in\G$ with at most $\ell$ nodes and (ii) an
$o(n)$-bit string $\textit{Table}(\G,\ell)$ such that the following
statements hold.
\begin{enumerate}
\addtolength{\itemsep}{-0.5\baselineskip}
\item 
\label{basis:1}
Given any graph $H\in \G$ with $\vv(H)\leq \ell$, it takes $O(\vv(H))$
time to obtain $\textit{Optcode}(H)$ and $\textit{Label}(H)$ from
$\textit{Table}(\G,\ell)$.

\item  
\label{basis:2}
Given $\textit{Optcode}(H)$ for any graph $H\in \G$ with $\vv(H)\leq
\ell$, it takes $O(\vv(H))$ time to obtain $H$ and $\textit{Label}(H)$
from $\textit{Table}(\G,\ell)$.
\end{enumerate}
\end{lemma}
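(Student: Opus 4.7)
The plan is a brute-force enumeration table based on the fact that $\ell=\text{poly}(\log\log n)$ is so small that all relevant quantities remain $n^{o(1)}$. Since $\log\textit{num}(\G,n)=O(n)$, there is a constant $c$ with $\textit{num}(\G,k)\leq 2^{ck}$ for every positive integer $k$, so the total number of isomorphism classes in $\G$ with at most $\ell$ nodes is at most $\sum_{k=1}^{\ell}2^{ck}=2^{O(\ell)}=n^{o(1)}$. Moreover, because $\ell^{2}=\text{poly}(\log\log n)=o(\log n)$, the adjacency matrix of any graph with at most $\ell$ nodes fits in a single word of the standard word-RAM model, which enables direct-address lookup in $O(1)$ time.

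First I would enumerate every labeled graph on at most $\ell$ nodes and discard those not in $\G$; there are at most $\ell\cdot 2^{\binom{\ell}{2}}=2^{O(\ell^{2})}=n^{o(1)}$ such graphs, so even a $\text{poly}(\ell)$-time membership test for $\G$ (which holds for every graph class considered in this paper) costs $n^{o(1)}$ time overall. For each surviving labeled graph I would compute a canonical form by trying all $\ell!=n^{o(1)}$ node permutations and retaining the lexicographically smallest adjacency matrix. Grouping by canonical form partitions the surviving graphs into the isomorphism classes of $\G$-graphs on at most $\ell$ nodes; within each node count $k\leq\ell$ I would list the classes in an arbitrary fixed order and assign the $i$-th one the $\lceil\log\textit{num}(\G,k)\rceil$-bit binary representation of $i-1$ as its $\textit{Optcode}$.

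Finally I would build $\textit{Table}(\G,\ell)$ as two direct-address arrays, both indexed by $o(\log n)$-bit keys. The first array, indexed by the word-sized adjacency matrix of a labeled $H$, stores $\textit{Optcode}$ of $H$'s canonical form together with the permutation $\textit{Label}(H)$ that carries $H$ to that canonical form; this supports statement~(\ref{basis:1}). The second array, indexed by the pair $(k,\textit{Optcode})$, stores the canonical graph and its canonical labeling, supporting statement~(\ref{basis:2}). Each array holds $n^{o(1)}$ entries of $\text{poly}(\ell)$ bits each, so the whole table occupies $n^{o(1)}\cdot\text{poly}(\log\log n)=o(n)$ bits and is built in $o(n)$ time. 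A query consults $O(1)$ words in the tables and then writes an $H$-sized output, for total time $O(\vv(H))$.

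The only step requiring care is the asymptotic accounting, namely that the $2^{O(\ell^{2})}$ candidate labeled graphs, the $\ell!$ canonicalization permutations, and the $2^{O(\ell)}$ isomorphism classes all stay $n^{o(1)}$, and that $\vv(H)^{2}\leq\ell^{2}=o(\log n)$ so each queried $H$ fits in a single word. Both facts follow immediately from $\ell=\text{poly}(\log\log n)$, so I anticipate no substantive obstacle; the argument is essentially a careful four-Russians-style precomputation accounting.
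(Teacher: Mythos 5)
Your proof is correct and takes the same approach the paper intends: the paper's one-line justification ``Straightforward by $O(1)^{\text{poly}(\ell)}=o(n)$'' is precisely the brute-force precomputation you spell out, and your accounting of $2^{O(\ell^{2})}$, $\ell!$, and $2^{O(\ell)}$ all being $n^{o(1)}$ (and hence $o(n)$) is exactly the point. One small note: you do not even need the membership test for $\G$ to run in $\text{poly}(\ell)$ time — time $2^{\text{poly}(\ell)}$ per candidate would still give $o(n)$ total — so the mild decidability assumption you invoke is harmless but could be stated more generously.
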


\begin{proof}
Straightforward by $O(1)^{\text{poly}(\ell)}=o(n)$.
\end{proof}

\subsection{Separator decomposition of planar graphs}

\begin{figure}[t]
\centerline{\input{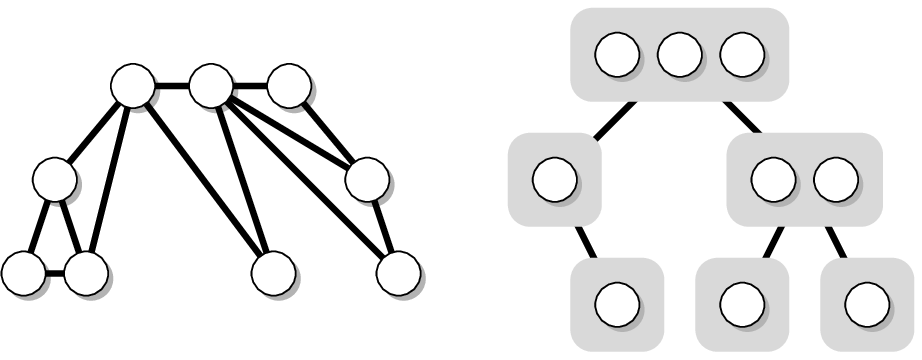}}
\caption{(a) A $9$-node plane graph $G$.
 (b) A separator decomposition $\T$ of $G$.}
\label{figure:goodrich}
\end{figure}

Sets $S_1,\ldots,S_d$ form a {\em disjoint partition} of set $S$ if
$S_1,\ldots,S_d$ are pairwise disjoint and $S=S_1\cup \cdots\cup S_d$.
A subset $S$ of $\V(G)$ is a {\em separator} of graph $G$ with respect
to $S_1$ and $S_2$ if (1) $S$, $S_1$, and $S_2$ form a disjoint
partition of $\V(G)$, (2) $S_1$ and $S_2$ are not adjacent in $G$, (3)
$|S|=O(\vv(G)^{1/2})$, and (4)
$\max\{|S_1|,|S_2|\}\leq\frac{2}{3}\cdot\vv(G)$.  A {\em separator
  decomposition}~\cite{BhattL84} of $G$ is a rooted binary tree $\T$
on a disjoint partition of $\V(G)$ such that the following two
statements hold, where ``nodes'' specify elements of $\V(G)$ and
``vertices'' specify elements of $\V(\T)$. Statement~1: Each leaf
vertex of $\T$ consists of a single node of $G$. Statement~2: Each
internal vertex $S$ of $\T$ is a separator of $G[\D(S)]$ with respect
to $\D(S_1)$ and $\D(S_2)$, where $S_1$ and $S_2$ are the child
vertices of $S$ in $\T$ and $\D(S)$ (respectively, $\D(S_1)$ and
$\D(S_2)$) is the union of all the vertices in the subtree of $\T$
rooted at $S$ (respectively, $S_1$ and $S_2$).  See
Figure~\ref{figure:goodrich} for an illustration.

\begin{lemma}[Goodrich~\cite{Goodrich95}]
\label{lemma:goodrich}
It takes $O(n)$ time to compute a separator decomposition for any given $n$-node planar graph.
\end{lemma}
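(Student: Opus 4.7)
The plan is to build $\T$ top-down by repeatedly invoking a linear-time planar separator subroutine, and to shave off the extra logarithmic factor of the naive recursion by a bottom-level precomputation.

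The top-level step is standard. Applying the Lipton--Tarjan planar separator theorem to $G$ produces, in $O(n)$ time, a subset $S\subseteq \V(G)$ with $|S|=O(\sqrt{n})$ together with a partition of $\V(G)\setminus S$ into two non-adjacent parts $S_1,S_2$ with $\max(|S_1|,|S_2|)\leq \tfrac{2}{3}n$. These are exactly the four properties required of an internal vertex of $\T$, so $S$ becomes the root. Recursing on the planar induced subgraphs $G[S_1]$ and $G[S_2]$ yields the two subtrees. Stopping the recursion once the current subgraph has at most $\log n$ nodes and finishing the decomposition by a precomputed lookup table over all $O(\log n)$-node planar graphs (constructed in $o(n)$ total time in the spirit of Lemma~\ref{lemma:basis}) handles the base case in $O(\log n)$ time per leaf subgraph.

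The delicate point is the running-time analysis. The straightforward recurrence $T(n)\leq T(|S_1|)+T(|S_2|)+O(n)$ yields only $O(n\log n)$ even with the bottom-level tabulation, because a fresh separator computation would be invoked at each of the $\Theta(\log n)$ recursion levels. To reach $O(n)$, I would import Goodrich's batching technique: rather than running the separator routine independently on each subgraph at every level, maintain a shared BFS tree and $r$-division of $G$ from which the separators for all subgraphs at a given recursion level can be extracted in time linear in that level's total node count. Summing the per-level work over the geometric decrease in subgraph sizes then yields $O(n)$ overall.

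The main obstacle will be ensuring that each extracted separator continues to satisfy the size, balance, and non-adjacency conditions for its particular subgraph, even though it is read off from a globally precomputed structure rather than recomputed on the subgraph itself. I expect to handle this using weighted variants of the planar separator theorem on contracted pieces, where boundary nodes of the shared structure are assigned extra weight so that separators produced at coarser levels refine consistently into separators at finer levels. Once this invariant is maintained, the $O(\sqrt{\vv(G[\D(S)])})$ size bound follows from the weighted separator theorem, while the $\tfrac{2}{3}$-balance and non-adjacency properties follow directly from how the coarser separator partitions the refined pieces.
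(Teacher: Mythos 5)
The paper gives no proof of this lemma; it is stated as a citation to Goodrich~\cite{Goodrich95} and used as a black box, so there is no internal argument to compare your sketch against. Evaluating it on its own merits: your outline has the right overall shape, but it is not a proof, for two reasons.

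First, the step that actually breaks the $\Theta(n\log n)$ barrier is left entirely as a black box. You write ``import Goodrich's batching technique: maintain a shared BFS tree and $r$-division of $G$ from which the separators for all subgraphs at a given recursion level can be extracted in time linear in that level's total node count.'' That sentence names the hard part rather than supplying it: how the shared structure is built, how a separator for an arbitrary piece at an arbitrary recursion level is read off in time proportional to that piece alone, and why each extracted separator still meets the $O(\sqrt{\cdot})$ size, $\tfrac{2}{3}$-balance, and non-adjacency conditions for the particular induced subgraph it sits in (you flag this last point yourself as ``the main obstacle'' and defer it to a weighted-separator argument you never carry out) are precisely the content of Goodrich's paper. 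A blind reconstruction that invokes ``Goodrich's batching technique'' by name, without specifying it, is circular.

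Second, the base case is quantitatively broken. You propose stopping the recursion at $O(\log n)$-node pieces and finishing by a precomputed lookup table over all $O(\log n)$-node planar graphs, built ``in the spirit of Lemma~\ref{lemma:basis}'' in $o(n)$ time. But the number of (even unlabeled) planar graphs on $c\log n$ nodes is $n^{c\log_2\gamma+o(1)}$ with $\gamma\approx 27.2$, i.e.\ $n^{>4}$ already for $c=1$, so the table cannot be enumerated in $o(n)$ time. Lemma~\ref{lemma:basis} deliberately restricts to $\ell=\text{poly}(\log\log n)$ precisely so that $O(1)^{\text{poly}(\ell)}=o(n)$; that bound does not stretch to $\ell=\Theta(\log n)$. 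The table also buys you nothing even if it were affordable: the $\Theta(\log n)$ recursion levels from size $n$ down to size $\log n$ already account for the full $\Theta(n\log n)$ overhead, so you need the batching in any case, and once the batching works it carries the recursion all the way down to singleton leaves, as the paper's definition of a separator decomposition in fact requires.
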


\subsection{Planarizers for non-planar graphs}
The {\em genus} of a graph $G$ is the smallest integer $g$
such that $G$ can be embedded on an orientable surface with $g$
handles without edge crossings~\cite{Gross87}. For example, the genus
of a planar graph is zero.
By Euler's formula (see, e.g.,~\cite{GilbertHT84}), an $n$-node
genus-$O(n)$ graph has $O(n)$ edges.  Determining the genus of a
general graph is NP-complete~\cite{Thomassen89}, but
Mohar~\cite{Mohar99} showed that it takes linear time to determine
whether a graph is of genus $g$ for any $g=O(1)$. Mohar's algorithm is
simplified by Kawarabayashi et~al.~\cite{KawarabayashiMR08}.

\begin{lemma}[Mohar et al.~\cite{Mohar99,KawarabayashiMR08}]
\label{lemma:mohar}
It takes $O(n)$ time to compute a genus-$O(1)$ embedding for any given
$n$-node genus-$O(1)$ graph.
\end{lemma}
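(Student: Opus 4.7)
Since this is a classical topological-algorithms result being cited rather than reproved in this paper, the plan is to outline Mohar's overall strategy. First I would dispense with the planar case $g=0$ by invoking the Hopcroft--Tarjan linear-time planarity algorithm, which produces a combinatorial embedding as a byproduct. For general constant $g\geq 1$, the approach rests on a dichotomy driven by the tree-width of the input graph $G$, using a constant $w=w(g)$ to be chosen later.

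In the bounded-tree-width branch, embeddability on a genus-$g$ surface is expressible in monadic second-order logic over graphs (quantifying over the rotation system as edge sets), so Courcelle's theorem gives a linear-time decision procedure. To actually produce the embedding, I would enrich the dynamic programming over a tree decomposition by tracking, at each bag, the finitely many ``interface states'' corresponding to the partial rotation systems already induced on the boundary vertices of the processed subgraph; the state-space size depends only on $g$ and $w$, keeping the procedure linear in the number of bags and hence in $n$.

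In the large-tree-width branch, I would invoke the linear-time algorithm that locates a large subdivided wall in a graph whose tree-width exceeds $w(g)$. The wall then drives a reduction of one of two kinds: either the pattern by which $G$ attaches to the wall forces the existence of too many pairwise non-homotopic cycles to fit on a surface of genus $g$, in which case $G$ is rejected; or the wall contains a large flat portion whose interior can be safely contracted (or whose ``irrelevant'' vertex can be deleted) without affecting embeddability on a genus-$g$ surface. Recursing on the reduced graph and amortizing the work against the removed vertices yields an overall linear running time.

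The main obstacle is justifying the reduction in the second branch: one must prove that contracting the interior of a flat portion of a sufficiently large wall really does preserve embeddability in the genus-$g$ surface. This requires a careful analysis of how edges from the rest of $G$ can attach to that flat portion, and a classification showing that ``deep enough'' vertices are irrelevant to any embedding. Mohar's original proof handles this using machinery from Robertson--Seymour graph-minors theory; the subsequent Kawarabayashi--Mohar--Reed version replaces much of that machinery with a more direct argument exploiting bounded-genus separators, which is both simpler and easier to implement in linear time.
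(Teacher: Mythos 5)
This lemma appears in the paper only as a black-box citation to Mohar~\cite{Mohar99} and Kawarabayashi--Mohar--Reed~\cite{KawarabayashiMR08}; the paper gives no proof, so there is no internal argument to compare your sketch against. Treated as an account of the cited algorithms, your outline captures the right high-level shape of the Kawarabayashi--Mohar--Reed version: a tree-width threshold $w(g)$, dynamic programming over a tree decomposition in the bounded-width branch, and wall-finding together with an irrelevant-vertex (flat-wall contraction) reduction in the large-width branch. Two cautions are worth flagging. First, Mohar's original 1999 algorithm is not organized around this tree-width dichotomy; the explicit large/small tree-width split with the irrelevant-vertex argument is the hallmark of the later, simpler version, while Mohar's original paper extends partial embeddings bridge by bridge and returns a small obstruction when an extension fails. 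Attributing the same outline to both is a mild misrepresentation. Second, the claim that genus-$\le g$ embeddability is MSO-expressible should not rest on ``quantifying over the rotation system as edge sets''; cyclic orders at vertices are not directly objects of MSO over the incidence structure. The standard route is that the class of genus-$\le g$ graphs is minor-closed, hence has a finite obstruction set, and fixed-minor containment is MSO-expressible. Your dynamic program over partial-rotation interface states is still the right tool for actually \emph{producing} an embedding, but it is that step rather than the Courcelle-style decision procedure that carries the weight. None of these details affect the present paper, which uses the lemma only to convert an abstract genus-$O(1)$ input into an embedded graph before invoking the separation machinery of Section~\ref{section:separation}.
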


Gilbert et~al.~\cite{GilbertHT84} gave an $O(n+g)$-time algorithm to
compute an $O((gn)^{0.5})$-node separator of an $n$-node genus-$g$
graph, generalizing Lipton and Tarjan's classic separator theorem for
planar graphs~\cite{LiptonT79}. Our result relies on the following
planarization algorithm.

\begin{lemma}[Djidjev and Venkatesan~\cite{DjidjevV95}]
\label{lemma:2}
\label{lemma:djidjev}
Given an $n$-node graph $G$ embedded on a genus-$g$ surface, it takes
$O(n+g)$ time to compute a subset $V$ of $\V(G)$ with
$|V|=O((gn)^{0.5})$ such that $G\setminus V$ is planar.
\end{lemma}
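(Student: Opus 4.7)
The plan is to kill one handle of the embedding per iteration by deleting a short non-contractible cycle, so that after $g$ iterations the surface has effectively collapsed to a sphere. The core subroutine I would prove is the following engine lemma: any $m$-node graph $H$ embedded on an orientable surface of genus $h\geq 1$ contains a non-contractible simple cycle of length $O(\sqrt{m/h})$, locatable in $O(m+h)$ time. Given this engine, the overall algorithm runs at most $g$ rounds; in round $i$, with the current embedded graph having residual genus $h_i$, it invokes the engine, deletes the $O(\sqrt{n/h_i})$ vertices of the returned cycle, and then updates the combinatorial embedding of the residual graph on a surface of genus at most $h_i-1$ (by patching the two sides of the cut with disks, which is a local operation on rotation systems). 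After at most $g$ rounds the residual graph is planar, and the total number of deleted vertices is $\sum_{i=1}^{g} O(\sqrt{n/i}) = O(\sqrt{gn})$, as required.

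For the engine lemma I would run breadth-first search from an arbitrary root $r$ of $H$, obtaining level sets $L_0,L_1,\ldots,L_k$, and consider the fundamental cycle $C_e$ associated with each non-tree edge $e$ of the BFS spanning tree $T$. By Euler's formula a genus-$h$ graph has $O(m+h)$ edges, so there are $\Theta(m+h)$ non-tree edges, each giving a fundamental cycle whose length is controlled by the BFS depths of its endpoints. Because the first $\mathbb{Z}/2$-homology of a genus-$h$ surface has rank $2h$, at least $2h$ of these fundamental cycles must be non-contractible. I would then apply a Lipton--Tarjan style level-selection argument to locate a slab of consecutive BFS levels of total width $O(\sqrt{m/h})$ that contains the endpoints of at least one such non-contractible fundamental cycle; the corresponding cycle then has length $O(\sqrt{m/h})$. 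To identify non-contractibility in linear time without heavy topological machinery, I would precompute, in $O(m+h)$ time from the rotation system, a $\mathbb{Z}/2$ homology signature of length $O(h)$ for each non-tree edge, so that each candidate fundamental cycle is tested by an $O(h)$-bit dot product.

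The running-time accounting is then routine: each round is linear in the current graph size, which never exceeds $n+O(g)$, and the local embedding update after a cut is linear in the length of the removed cycle; summing over the rounds yields $O(n+g)$ overall. \textbf{The hardest part} is the engine lemma, specifically guaranteeing that some \emph{short} fundamental cycle is also \emph{non-contractible}. The BFS argument produces plenty of short cycles but does not a priori select non-contractible ones, while a canonical system of $2h$ non-contractible cycles generating homology might individually be long. Reconciling these two facts through a weighted averaging over BFS levels, trading cycle length against the number of non-contractible candidates whose endpoints are confined to a slab, is the main technical step I would have to carry out carefully.
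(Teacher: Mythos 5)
This lemma is cited by the paper from Djidjev and Venkatesan~\cite{DjidjevV95} and is not proved there, so there is no paper proof to compare against; I can only assess your attempt on its own merits, and it has real gaps.

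The most serious problem is the running-time accounting. Your outer loop runs for up to $g$ rounds, and you claim each round costs $O(m+h)$ and that ``summing over the rounds yields $O(n+g)$ overall.'' That arithmetic is wrong: each round performs a fresh BFS over a residual graph that still has $\Theta(n)$ vertices (you delete only $O(\sqrt{n/h_i})$ per round), so each round is $\Omega(n)$ and $g$ rounds cost $\Omega(gn)$, which is far larger than $O(n+g)$. This matters for the paper, which invokes the lemma with $g$ as large as $o(n/\log^2 n)$. Moreover the per-round bound of $O(m+h)$ is itself unjustified: computing an $O(h)$-bit $\mathbb{Z}/2$-homology signature for each of $\Theta(m+h)$ non-tree edges takes $\Omega\bigl((m+h)\cdot h\bigr)$ time, and every candidate cycle test is another $O(h)$-bit dot product. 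Nowhere do you show how to avoid these extra factors of $g$, which is precisely what a genuine $O(n+g)$ algorithm has to do.

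The ``engine lemma'' is also not a routine ingredient to be filled in later. The claim that every $m$-vertex graph embedded on a genus-$h$ surface contains a non-contractible cycle of length $O(\sqrt{m/h})$ is, after triangulating, Hutchinson's edge-width bound --- a theorem in its own right, and the clean $\sqrt{m/h}$ form (without a $\log h$ factor) is nontrivial. Your sketch of a Lipton--Tarjan level-averaging argument identifies a thin slab of BFS levels, but it does not show that the slab contains a \emph{non-contractible} fundamental cycle, nor does it yield a way to certify non-contractibility in total time $O(m+h)$ rather than $O\bigl((m+h)h\bigr)$; you candidly flag this as ``the hardest part,'' but it is load-bearing and unresolved. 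Finally, a smaller but genuine issue: cutting along a \emph{separating} non-contractible cycle and capping both sides with disks yields two surfaces whose genera sum to $h$, not a single surface of genus $h-1$, so the ``kill one handle per iteration'' bookkeeping and the $\sum_{i=1}^{g}O(\sqrt{n/i})=O(\sqrt{gn})$ telescoping need to be redone as a recursion over components, with care that the sizes and genera of the pieces still telescope correctly.
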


\section{Separation and refinement}
\label{section:separation}

\begin{figure}[t]
\centerline{\input{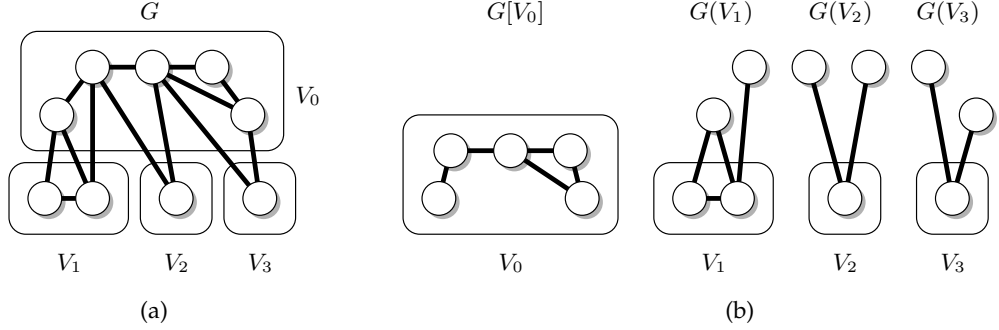}}
\caption{(a) A $9$-node plane graph with a separation
  $[V_0,\ldots,V_3]$.  (b) $G[V_0]$, $G(V_1)$, $G(V_2)$, and $G(V_3)$
  form a disjoint partition of the edges of $G$.}
\label{figure:subgraph}
\end{figure}

We say that $[V_0,\ldots,V_p]$ with $p\geq 1$ is a {\em separation} of
graph $G$ if the following properties hold.
\begin{enumerate}[\em Property~S1:]
\addtolength{\itemsep}{-0.5\baselineskip}
\item 
\label{separation:1}
$V_0,\ldots,V_p$ form a disjoint partition of $\V(G)$.
\item 
\label{separation:2}
Any two $V_i$ and $V_{i'}$ with $1\leq i\ne i'\leq p$ are not adjacent
in $G$. 
\setcounter{pcounter}{\theenumi}
\end{enumerate}
For instance, Figure~\ref{figure:subgraph}(a) shows a separation
$[V_0,V_1,V_2,V_3]$ of graph $G$ and Figure~\ref{figure:refinement}(a)
shows another separation $[U_0,U_1,U_2]$ of $G$.
For any subset $V$ of $\V(G)$, let $G(V)$ be the subgraph of $G$
induced by $V\cup \N_G(V)$ excluding the edges of $G[\N_G(V)]$.  If
$[V_0,\ldots,V_p]$ is a separation of $G$, then
$G[V_0],G(V_1),\ldots,G(V_p)$ form a disjoint partition of the edges
of $G$.  See Figures~\ref{figure:subgraph}(b)
and~\ref{figure:refinement}(b) for illustrations.
Let $\log^{(0)}n = n$.  For any positive integer $k$, let $\log^{(k)}
n=\log\ (\log^{(k-1)} n)$.  For notational brevity, for any
nonnegative integer $k$, let
\begin{displaymath}
\ell_k=\max\{1,\log^{(k)} n\}.
\end{displaymath}
For a nonnegative integer $k$, separation $[V_0,\ldots,V_p]$ of an
$n$-node graph $G$ is a {\em $k$-separation} of $G$ if the following
three properties hold.
\begin{enumerate}[\em Property~S1:]
\addtolength{\itemsep}{-0.5\baselineskip}
\setcounter{enumi}{\thepcounter}
\item 
\label{separation:3}
$|V_0|=o(\frac{n}{\ell_k})$ and $p=o(\frac{n}{\ell_k})+1$.

\item 
\label{separation:4}
$|V_i|+\nn_G(V_i)=\text{poly}(\ell_k)$ holds for each
$i=1,\ldots,p$.

\item
\label{separation:5}
$\sum_{i=1}^p\nn_G(V_i)=o(\frac{n}{\ell_k})$.
\end{enumerate}
One can verify that $[\varnothing, \V(G)]$ is a $0$-separation of
$G$.\footnote{The ``$+1$'' in Property~S\ref{separation:3} is
  redundant for $k\geq 1$. However, we need it so that $[\varnothing,
    \V(G)]$ is a $0$-separation of $G$, since $1\ne
  o(\frac{n}{\ell_0})$.}
\begin{figure}[t]
\centerline{\input{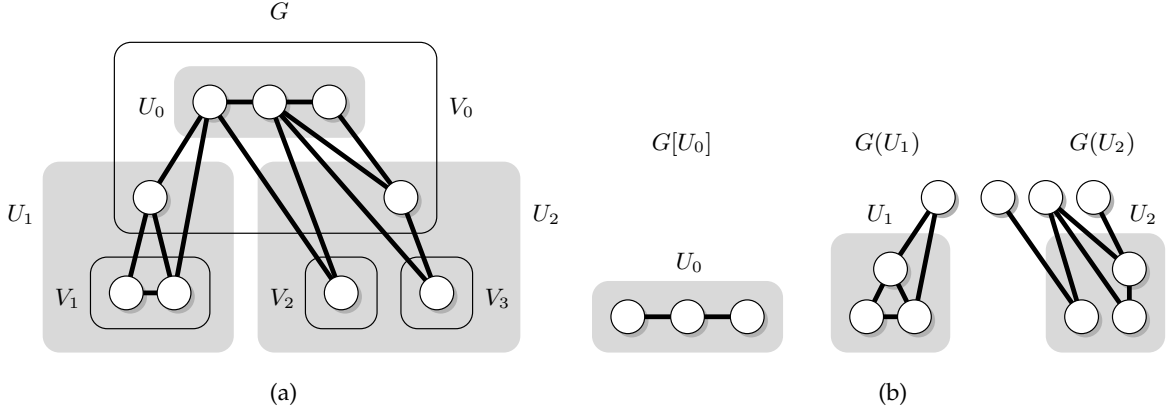}}
\caption{(a) Separation $[V_0,V_1,V_2,V_3]$ is a refinement of
  separation $[U_0,U_1,U_2]$. (b) Subgraphs $G[U_0]$, $G(U_1)$, and
  $G(U_2)$ of $G$.}
\label{figure:refinement}
\end{figure}
Let $[V_0,\ldots,V_p]$ and $[U_0,\ldots,U_q]$ be two separations of graph $G$.
We say that $[V_0,\ldots,V_p]$ is a {\em
  refinement} of $[U_0,\ldots,U_q]$ if the following
three properties hold.
\begin{enumerate}[\em Property~R1:]
\addtolength{\itemsep}{-0.5\baselineskip}
\item 
\label{refine:1}
$U_0\subseteq V_0$.

\item 
\label{refine:2}
For each index $i=1,\ldots,p$, there is an index $j$ with $1\leq
j\leq q$ and $V_i\subseteq U_j$.

\item
\label{refine:3}
For any indices $i$, $i'$, $i''$ with $1\leq i<i'<i''\leq p$, if
$V_i\cup V_{i''}\subseteq U_j$, then $V_{i'}\subseteq U_j$.

\end{enumerate}
For instance, in Figure~\ref{figure:refinement}(a), 
$[V_0,V_1,V_2,V_3]$ is a refinement of $[U_0,U_1,U_2]$.
Below is the main lemma of the section.

\begin{lemma}
\label{lemma:lemma3.1}
Let $k$ be a positive integer.  Let $G$ be an $n$-node connected graph
embedded on a genus-$o(n/\ell_k^2)$ surface.  Given a
$(k-1)$-separation $\SSS_{k-1}$ of $G$, it takes $O(n)$ time to
compute a $k$-separation $\SSS_{k}$ of $G$ that is a refinement of
$\SSS_{k-1}$.
\end{lemma}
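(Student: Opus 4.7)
The plan is to combine Djidjev--Venkatesan's planarizer with Goodrich's separator decomposition to subdivide each piece $U_j$ of the given $(k-1)$-separation $\SSS_{k-1}=[U_0,\ldots,U_q]$ into tiny pieces of size $\text{poly}(\ell_k)$, absorbing the newly introduced separator vertices into an enlarged $V_0$.

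I would proceed in four steps. \emph{Step~1:} Apply Lemma~\ref{lemma:djidjev} to $G$ to obtain $W\subseteq\V(G)$ with $|W|=O((gn)^{1/2})=o(n/\ell_k)$ such that $G\setminus W$ is planar; the size bound uses the hypothesis $g=o(n/\ell_k^2)$. \emph{Step~2:} Let $H$ be the set of nodes whose $G$-degree exceeds $\ell_k^{2}$; since a genus-$o(n)$ graph has $O(n)$ edges, the handshake lemma gives $|H|=O(n/\ell_k^{2})=o(n/\ell_k)$. \emph{Step~3:} Put $V_0^{\star}:=U_0\cup W\cup H$; by Property~S\ref{separation:3} of $\SSS_{k-1}$, we have $|V_0^{\star}|=o(n/\ell_k)$. \emph{Step~4:} For each $j=1,\ldots,q$, apply Lemma~\ref{lemma:goodrich} to every connected component of the planar graph $(G\setminus W)[U_j\setminus V_0^{\star}]$, and derive from the resulting separator decompositions a Frederickson-style $r$-division with $r=\ell_k^{c}$ for a sufficiently large constant $c$: a collection of edge-disjoint pieces whose interiors partition $U_j\setminus V_0^{\star}$, each piece having at most $r$ nodes and at most $O(\sqrt{r})$ boundary nodes, and with the total boundary count (summed over all pieces of all $U_j$) being $O(n/\sqrt{r})$. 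Declare the piece interiors, listed in $j$-major order, to be $V_1,\ldots,V_p$, and set $V_0:=V_0^{\star}\cup(\text{all boundary nodes})$.

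Properties R\ref{refine:1}--R\ref{refine:3} are immediate from the construction. S\ref{separation:1} holds by disjointness. For S\ref{separation:2}: two sets $V_i\subseteq U_j$ and $V_{i'}\subseteq U_{j'}$ with $j\ne j'$ are non-adjacent by S\ref{separation:2} of $\SSS_{k-1}$, while two sets inside the same $U_j$ are separated in $G$ by their piece boundaries together with $W$, both lying in $V_0$. For S\ref{separation:3}: $|V_0|\le|V_0^{\star}|+O(n/\sqrt{r})=o(n/\ell_k)$ as soon as $r=\omega(\ell_k^{2})$, and $p\le O(n/r)+q=o(n/\ell_k)+1$. For S\ref{separation:4}: $|V_i|\le r=\text{poly}(\ell_k)$, and since every node in $V_i$ has $G$-degree at most $\ell_k^{2}$, also $\nn_G(V_i)\le r\cdot\ell_k^{2}=\text{poly}(\ell_k)$. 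Linear time follows from Lemmas~\ref{lemma:djidjev} and~\ref{lemma:goodrich}.

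The main obstacle is establishing Property~S\ref{separation:5}, the sum bound $\sum_{i=1}^{p}\nn_G(V_i)=o(n/\ell_k)$. The contribution from the piece boundaries is controlled by the total multiplicity $O(n/\sqrt{r})=o(n/\ell_k)$ of the $r$-division. The subtle part is the contribution from the ``legacy'' vertices in $V_0^{\star}=U_0\cup W\cup H$: each such $u$ is adjacent to at most $\min(\deg_G(u),p)$ distinct $V_i$'s. To control this, I would exploit the inherited bound $\sum_{j=1}^{q}\nn_G(U_j)=o(n/\ell_{k-1})$ from Property~S\ref{separation:5} of $\SSS_{k-1}$ to handle the $U_0$-contribution, and combine it with the planar locality of the $r$-division (each boundary vertex lies in $O(1)$ pieces, so that the neighbors of any single high-degree vertex in $W\cup H$ cluster into $O(\sqrt{r})$-sized cyclic ``fans'' of pieces rather than into $\deg_G(u)$ separate pieces) to show that the $W$- and $H$-contributions are likewise $o(n/\ell_k)$.
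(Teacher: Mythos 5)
Your Steps~1--3 track the paper's Lemma~\ref{lemma:lemma2.1} almost exactly (planarize with Djidjev--Venkatesan, kill high-degree nodes, absorb everything into an enlarged $V_0$), and they are fine; the divergence is in Step~4, where you form the small pieces via a Frederickson-style $r$-division rather than the paper's Algorithm~\ref{algorithm:clustering}, and that divergence opens two genuine gaps.

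The first gap is already at Property~S\ref{separation:3}. Once $V_0^\star$ is removed, $G\setminus V_0^\star$ may fall into $\Theta(n)$ connected components of size~$1$ (nothing in your Step~4 prevents this). An $r$-division of a disconnected graph assigns each component of size $\le r$ its own piece, so $p$ can be $\Theta(n)$, not $O(n/r)+q$ as you claim. This is exactly the problem the paper's hook-based clustering is designed to fix: Algorithm~\ref{algorithm:clustering} groups whole connected components of $G\setminus V_0$ around a shared ``hook'' $\xx_i\in V_0$, consecutively in cyclic order, and greedily packs them up to size $\ell_k^4$, so that the number of parts is tied to the number of hooks rather than the number of components. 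Your proposal has no analogous merging step.

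The second gap is the one you yourself flag as ``the main obstacle,'' Property~S\ref{separation:5}, and the sketch you offer does not close it. Bounding the contribution of the $r$-division boundary vertices by $O(n/\sqrt r)$ is fine, but the contribution of $u\in W\cup H\cup U_0$ is not controlled by ``planar locality'' or ``cyclic fans'': a vertex $u\in W$ is precisely one that was \emph{removed} to make the graph planar, so its neighbors need not cluster in any combinatorially useful way inside the planar piece structure, and a vertex $u$ of degree $\Theta(n)$ could a priori be adjacent to $\Theta(n/r)$ distinct pieces. The paper's actual mechanism for S\ref{separation:5} is a contraction argument: because each $V_i$ is a union of components all adjacent to a single hook $\xx_i$ and consecutive around it, the operation of splitting $\xx_i$ into $\xx_i$ and $v_i$, handing $v_i$ the relevant arc of neighbors, and contracting $V_i\cup\{v_i\}$ into $v_i$ preserves the genus-$o(n/\ell_k^2)$ embedding; the resulting simple graph has $|V_0|+p=o(n/\ell_k)$ nodes and hence $o(n/\ell_k)$ edges, and $\sum_i\nn_G(V_i)$ is charged to these edges. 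That contraction argument relies crucially on each $V_i$ being ``anchored'' to one hook so that the contraction is embedding-preserving; $r$-division pieces have no such anchor and need not even have connected interiors, so the argument does not transfer. Without some replacement for the hooking, your proof of S\ref{separation:5} is incomplete.
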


The proof of Lemma~\ref{lemma:lemma3.1} needs the following lemma, which
can be proved by Lemmas~\ref{lemma:goodrich} and~\ref{lemma:djidjev}.

\begin{lemma}
\label{lemma:lemma2.1}
Let $k$ be a positive integer.  Given an $n$-node graph $G$ embedded
on a genus-$o(n/\ell_k^2)$ surface, it takes $O(n)$ time to compute an
$o(\frac{n}{\ell_k})$-node subset $V$ of $\V(G)$ such that each node
of $\V(G)\setminus V$ has degree at most $\ell_k^2$ in $G$ and each
connected component of $G\setminus V$ has at most $\ell_k^4$ nodes.
\end{lemma}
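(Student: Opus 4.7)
The plan is to compute $V = V_1 \cup V_2 \cup V_3$ as the union of three sets produced in three successive stages, each of size $o(n/\ell_k)$ and each computed in $O(n)$ time.

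In stage~1, I apply Lemma~\ref{lemma:djidjev} to the embedded graph $G$ to obtain a set $V_1 \subseteq \V(G)$ of size $O(\sqrt{gn})$ whose removal planarizes $G$. Because the genus satisfies $g = o(n/\ell_k^2)$, this gives $|V_1| = o(n/\ell_k)$. In stage~2, I let $V_2 = \{v \in \V(G) : \deg_G(v) > \ell_k^2\}$. Since $g = o(n)$, Euler's formula for a genus-$g$ surface gives $|E(G)| \le 3n + 6g - 6 = O(n)$, and summing degrees yields $|V_2| \le 2|E(G)|/\ell_k^2 = O(n/\ell_k^2) = o(n/\ell_k)$. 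The set $V_2$ is built in $O(n)$ time by scanning adjacency lists.

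In stage~3, the graph $G' = G \setminus (V_1 \cup V_2)$ is planar (as a subgraph of a planar graph), so Lemma~\ref{lemma:goodrich} computes a separator decomposition $\T$ of $G'$ in $O(n)$ time. I then extract from $\T$ an $r$-division with $r = \ell_k^4$: let $\T'$ be the truncation of $\T$ obtained by removing the descendants of every tree vertex $S$ with $|\D(S)| \le \ell_k^4$, and let $V_3$ be the union of all internal vertices of $\T'$ viewed as subsets of $\V(G')$. By construction, every connected component of $G' \setminus V_3$ is contained in $\D(L)$ for some leaf $L$ of $\T'$ and therefore has at most $\ell_k^4$ nodes. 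A Frederickson-style induction on subproblem size, using $|S| = O(\sqrt{|\D(S)|})$ at every internal vertex of $\T$ together with the $\tfrac{2}{3}$:$\tfrac{1}{3}$ balance, shows that breaking an $m$-node planar subproblem into components of size $\le r$ costs $O(m/\sqrt{r})$ total separator mass, so $|V_3| = O(n/\ell_k^2) = o(n/\ell_k)$. Setting $V = V_1 \cup V_2 \cup V_3$, every vertex of $\V(G) \setminus V$ survived stage~2 and hence has degree at most $\ell_k^2$ in $G$.

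The main technical obstacle is the bound $|V_3| = O(n/\ell_k^2)$. A naive depth-by-depth accounting via Cauchy--Schwarz bounds the separator mass at each level of $\T'$ by $O(n/\ell_k^2)$ but multiplies by an extra $\Theta(\log(n/\ell_k^4))$ factor, which is already fatal when $k = 1$. The Frederickson-style amortization circumvents this by charging the separator size at an internal vertex to the ``slack'' between the parent and the children's subproblem sizes, turning the sum into a telescoping one rather than a depth-by-depth sum; this is the step that requires the balanced split property of Goodrich's decomposition in an essential way.
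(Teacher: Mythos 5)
Your decomposition $V = V_1 \cup V_2 \cup V_3$ matches the paper's $V = V' \cup V'' \cup V'''$ exactly: a Djidjev--Venkatesan planarizer, a degree-$\ell_k^2$ threshold set, and the union of all sufficiently large separators in a Goodrich decomposition, with the same $o(n/\ell_k)$ bounds on the first two pieces. Where you diverge is the argument that $|V_3| = O(n/\ell_k^2)$. You correctly observe that a level-by-level Cauchy--Schwarz account over the truncated tree carries a $\Theta(\log n)$ overhead that is fatal when $k=1$, and you sketch a Frederickson-style telescoping recursion to remove it. The paper takes a different but equally valid route: it buckets the tree vertices $S$ with $\dd(S) > r^2$ (where $r = \ell_k^2$) into geometrically growing intervals $\left(r^2(3/2)^{i-1},\ r^2(3/2)^i\right]$, and uses the $2/3$-balance of the decomposition to argue that any two vertices in the same bucket have disjoint $\D(\cdot)$, so bucket $i$ contains fewer than $n/(r^2 (3/2)^{i-1})$ members; since each such $S$ has $|S| = O(r(3/2)^{i/2})$, summing over buckets gives a convergent geometric series totalling $O(n/r) = O(n/\ell_k^2)$. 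Both accountings kill the spurious $\log$ factor; the paper's bucketing is self-contained and avoids having to guess and verify an inductive hypothesis, while your Frederickson citation is the more standard reference. A minor cosmetic difference: you apply Goodrich to $G \setminus (V_1 \cup V_2)$, whereas the paper applies it to $G \setminus V'$ alone and removes the high-degree nodes independently; either works since both graphs are planar.
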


\begin{proof}
We first apply Lemma~\ref{lemma:djidjev} to compute in $O(n)$ time an
$o(\frac{n}{\ell_k})$-node subset $V'$ of $\V(G)$ such that
$G\setminus V'$ is planar.  We then apply Lemma~\ref{lemma:goodrich}
to compute in $O(n)$ time a separator decomposition $\T$ of
$G\setminus V'$.  For each vertex $S$ of $\T$, let $\D(S)$ denote the
union of all the vertices in the subtree of $\T$ rooted at $S$ and let
$\dd(S)=|\D(S)|$.  Let $r=\ell_k^2$.  Let $V''$ consist of the nodes
of $G$ with degree more than $r$ in $G$.  Let $V'''$ be the union of
all the vertices $S$ of $\T$ with $\dd(S)>r^2$.  Let $V=V'\cup V''\cup
V'''$.  By $V'\cup V'''\subseteq V$ and the definition of $\T$, each
connected component of $G\setminus V$ has at most $r^2$ nodes.  By
$V''\subseteq V$, each node of $\V(G)\setminus V$ has degree at most
$r$ in $G$.  Since $G$ has $O(n)$ edges,
$|V''|=O(\frac{n}{r})=o(\frac{n}{\ell_k})$.
It remains to show $|V'''|=o(\frac{n}{\ell_k})$.  For each index
$i\geq 1$, let $\II_i$ consist of the vertices $S$ of $\T$ with
$r^2\cdot (\frac{3}{2})^{i-1}<\dd(S)\leq r^2\cdot (\frac{3}{2})^i$.
By $r^2\geq 1$ and $i\geq 1$, each $S\in\II_i$ is an internal vertex
of $\T$.  By definition of $\T$, we know that $\D(S)$ and $\D(S')$ are
disjoint for any two distinct elements $S$ and $S'$ of $\II_i$,
implying that $\sum_{S\in \II_i}\dd(S)\leq n$ holds.  Since
$\dd(S)>r^2\cdot(1.5)^{i-1}$ holds for each $S\in \II_i$, we have
$|\II_i| < \frac{n}{r^2\cdot (1.5)^{i-1}}$.  Since each $S\in \II_i$
is an internal vertex of $\T$, $S$ is a separator of $G[\D(S)]$.
Therefore, $|S|=O(r\cdot (1.5)^{i/2})$ holds for each vertex $S$ in
$\II_i$.  We have $|V'''|=\sum_{i\geq 1}\sum_{S\in
  \II_i}|S|=\sum_{i\geq1}O(\frac{n}{r\cdot(1.5)^{i/2}})=O(\frac{n}{r})=o(\frac{n}{\ell_k})$.
The lemma is proved.
\end{proof}

\begin{algorithm}[t]
{
\begin{minipage}[t]{1cm}
\begin{tabbing}
\quad\=\quad\=\quad\=\quad\=\quad\=\quad\=\quad\=\kill
Let $p=0$ and let all elements of $\C$ be initially unmarked.\\
For each $j=1,\ldots,q$, perform the following repeat-loop.\\
\>Repeat the following steps until all elements of $\C_j$ are marked.\\
\>\>Let $v_0$ be an arbitrary node of $V_0$ that is adjacent to some
unmarked element of $\C_j$.\\
\>\>Let $\U$ consist of the unmarked elements of $\C_j$ that are
adjacent to $v_0$ in $G$.\\
\>\>Let $C_{i_1},\ldots,C_{i_3}$ be the elements
of $\U$ in clockwise order around $v_0$ in $G$.\\
\>\>Mark all $i_3-i_1+1$ elements of $\U$.\\
\>\>Repeat the following four steps until $i_1>i_3$.\\
\>\>\>Let $i_2$ be the largest index with $i_1\leq i_2\leq i_3$ and
$|C_{i_1}|+\cdots+|C_{i_2}|\leq \ell_k^4$.\\
\>\>\>Let $p=p+1$.\\
\>\>\>Let $\xx_p = v_0$ and $V_p=C_{i_1}\cup \cdots \cup C_{i_2}$.\\
\>\>\>Let $i_1=i_2+1$.\\
Output $V_1,\ldots,V_p$ and $\textit{hook}_1,\ldots,\textit{hook}_p$.
\end{tabbing}
\end{minipage}
}
\caption{}
\label{algorithm:clustering}
\end{algorithm}

\begin{figure}[t]
\centerline{\input{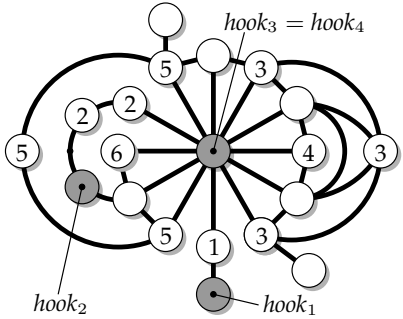}}
\caption{An illustration for Algorithm~\ref{algorithm:clustering}.}
\label{figure:clustering}
\end{figure}

\begin{proof}[Proof of Lemma~\ref{lemma:lemma3.1}]
Suppose that $[U_0,\ldots,U_q]$ is the given $(k-1)$-separation
$\SSS_{k-1}$.  Let $V'_0$ be the $O(n)$-time computable subset of
$\V(G)$ ensured by Lemma~\ref{lemma:lemma2.1}.  We have
$|V'_0|=o(\frac{n}{\ell_k})$.  Let $V_0=U_0\cup V'_0$.  Let $\C$
consist of the connected components of $G\setminus V_0$.  By
$V'_0\subseteq V_0$, each element of $\C$ has at most $\ell_k^4$
nodes.  By $U_0\subseteq V_0$ and Properties~S\ref{separation:1}
and~S\ref{separation:2} of $\SSS_{k-1}$, each element of $\C$ is
contained by some $U_j$ with $1\leq j\leq q$.  For each
$j=1,\ldots,q$, let $\C_j$ consist of the elements $C$ of $\C$ with
$C\subseteq U_j$.  We run Algorithm~\ref{algorithm:clustering} to
obtain (a) a disjoint partition $V_1,\ldots,V_p$ of $G\setminus V_0$
and (b) $p$ nodes $\xx_1,\ldots,\xx_p$ of $V_0$, which may not be
distinct. Let $\SSS_k = [V_0,\ldots,V_p]$.  Since $G$ is connected,
each element of $\C$ is adjacent to $V_0$.  The first statement of the
outer repeat-loop is well defined.  Since each element of $\C$ has at
most $\ell_k^4$ nodes, the first statement of the inner repeat-loop is
well defined.  See Figure~\ref{figure:clustering} for an illustration:
Suppose that all nodes are in $U_1$.  All nodes are initially
unmarked.  Let $V_0$ consist of the nine unlabeled nodes, including
the three gray nodes.  For each $i=1,\ldots,6$, let $C_i$ consist of
the nodes with label $i$. That is, $C_1,\ldots,C_6$ are the six
connected components of $G\setminus V_0$. Suppose that $\ell_k^4=7$
and the first two iterations of the outer repeat-loop obtain $V_1=C_1$
and $V_2=C_2$. In the third iteration of the outer repeat-loop,
$C_3,\ldots,C_6$ are the unmarked elements of $\C$ that are adjacent
to $\textit{hook}_3$ in clockwise order around $\textit{hook}_3$. By
$|C_3|+|C_4|+|C_5|=7$, the two iterations of the inner repeat-loop
obtain $V_3=C_3\cup C_4\cup C_5$ and $V_4=C_6$.

By definition of Algorithm~\ref{algorithm:clustering}, one can verify
that Properties~R\ref{refine:1},~R\ref{refine:2}, and~R\ref{refine:3}
hold for $\SSS_{k-1}$ and $\SSS_k$ (that is, $\SSS_k$ is a refinement
of $\SSS_{k-1}$) and Properties~S\ref{separation:1}
and~S\ref{separation:2} hold for $\SSS_k$.  By
Property~S\ref{separation:3} of~$\SSS_{k-1}$, we have
$|U_0|=o(\frac{n}{\ell_{k-1}})=o(\frac{n}{\ell_k})$.  By
$|V'_0|=o(\frac{n}{\ell_k})$, we have $|V_0|\leq
|U_0|+|V'_0|=o(\frac{n}{\ell_k})$.  Let $I_{\text{small}}$ consist of
the indices $i$ with $1\leq i\leq p$ and $|V_i| \leq
\frac{1}{2}\cdot\ell_k^4$.  Let $I_{\text{large}}$ consist of the
indices $i$ with $1\leq i\leq p$ and $|V_i| >
\frac{1}{2}\cdot\ell_k^4$.  We show
$p=|I_{\text{small}}|+|I_{\text{large}}|=o(\frac{n}{\ell_k})$ as
follows.  By Property~S\ref{separation:1} of $\SSS_k$, we have
$|I_{\text{large}}|= o(\frac{n}{\ell_k})$.  To show
$|I_{\text{small}}|= o(\frac{n}{\ell_k})$, we categorize the indices
$i$ in $I_{\text{small}}$ with $1\leq i<p$ into the the following
types, where $j$ is the index with $V_i\subseteq U_j$:
\begin{description}
\addtolength{\itemsep}{-0.5\baselineskip}
\item[\rm\em Type 1:] $i\in I_{\text{small}}$ and $i+1\in
  I_{\text{large}}$.  The number of such indices $i$ is at
  most $|I_{\text{large}}|=o(\frac{n}{\ell_k})$.

\item[\rm\em Type 2:] $i\in I_{\text{small}}$ and $i+1\in
  I_{\text{small}}$.
\begin{description}
\addtolength{\itemsep}{-0.2\baselineskip}
\item[\rm\em Type 2a:] $V_{i+1}\subseteq U_{j+1}$.  The number of such
  indices $i$ is at most
  $q=o(\frac{n}{\ell_{k-1}})=o(\frac{n}{\ell_k})$.

\item[\rm\em Type 2b:] $V_{i+1}\subseteq U_j$ and
$\textit{hook}_i\in V_0\setminus U_0$.  By
  Properties~S\ref{separation:1} and~S\ref{separation:2} of
  $\SSS_{k-1}$, we know that $\textit{hook}_i\in U_j$. By definition
  of Algorithm~\ref{algorithm:clustering},
  $\textit{hook}_{i'}\ne\textit{hook}_i$ holds for all indices $i'$
  with $i<i'\leq p$.  The number of such indices $i$ is at
  most $|V_0\setminus U_0|\leq |V_0|=o(\frac{n}{\ell_k})$.

\item[\rm\em Type 2c:] $V_{i+1}\subseteq U_j$ and $\textit{hook}_i\in
  U_0$. We have $\textit{hook}_i\in \N_G(U_j)$.  By definition of
  Algorithm~\ref{algorithm:clustering},
  $\textit{hook}_{i'}\ne\textit{hook}_i$ holds for all indices $i'>i$
  with $V_{i'}\subseteq U_j$.  By Property~S\ref{separation:5} of
  $\SSS_{k-1}$, the number of such indices $i$ is at most
  $\sum_{j=1}^q\nn_G(U_j)=o(\frac{n}{\ell_{k-1}})=o(\frac{n}{\ell_k})$.
\end{description}
\end{description}
We have $p=o(\frac{n}{\ell_k})$.  Property~S\ref{separation:3} holds
for $\SSS_k$.  By definition of Algorithm~\ref{algorithm:clustering},
$|V_i|\leq \ell_k^4$ holds for each $i=1,\ldots,p$.  By $V'_0\subseteq
V_0$, each node of $\V(G)\setminus V_0$ has degree at most
$\ell_k^2$. Property~S\ref{separation:4} holds for $\SSS_k$.

\begin{figure}[t]
\centerline{\input{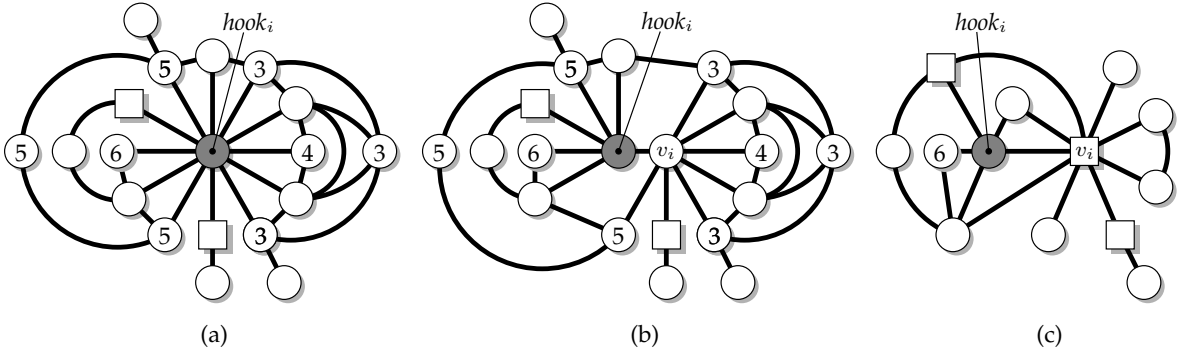}}
\caption{The operation that contracts all nodes of $V_i$ into a node
  $v_i$, which takes over some neighbors of $\xx_i$.}
\label{figure:operation}
\end{figure}

To see Property~S\ref{separation:5} of $\SSS_k$, we obtain a
contracted graph from $G$ by performing the following two steps for
each $i=1,\ldots,p$.\footnote{The contraction procedure is only for
  proving Property~S\ref{separation:5} of $\SSS_k$, not needed for
  computing $\SSS_k$.}
{\em Step~1:} Let $C_{i_1},\ldots,C_{i_2}$ be the elements of $\C$
with $V_i=C_{i_1}\cup C_{i_1+1}\cup \cdots \cup C_{i_2}$ in clockwise
order around $\xx_i$ in $G$.  Split $\xx_i$ into two adjacent nodes
$\xx_i$ and $v_i$ and let $v_i$ take over the neighbors of $\xx_i$ in
clockwise order around $\xx_i$ from the first neighbor of $\xx_i$ in
$C_{i_1}$ to the first neighbor of $\xx_i$ in $C_{i_2}$. {\em Step~2:}
Contract all nodes of $V_i$ into node $v_i$ and delete multiple edges
and self-loops.  See Figure~\ref{figure:operation} for an
illustration: For each $i=3,\ldots,6$, let $C_i$ consist of the nodes
with labels~$i$ in Figure~\ref{figure:operation}(a).  Suppose that
$i_1=3$, $i_2=5$, and $V_i=C_3\cup C_4\cup C_5$. The unlabeled circle
nodes belong to $V_0$. The square nodes are two previously contracted
nodes $v_{i'}$ and $v_{i''}$ from $V_{i'}$ and $V_{i''}$ for some
indices $i'$ and $i''$ with $1\leq i'\ne i''<i$.
Figure~\ref{figure:operation}(b) shows the result of
Step~1. Figure~\ref{figure:operation}(c) shows the result of Step~2.
Observe that each node that is adjacent to $V_i$ becomes a neighbor of
$v_i$ after applying Steps~1 and~2.  Also, each neighbor of $\xx_i$
that is not in $V_i$ either remains a neighbor of $\xx_i$ or becomes a
neighbor of $v_i$ after applying Steps~1 and~2.  Therefore, for each
$i=1,\ldots,p$ and each node $v_0\in \N_G(V_i)$, there is either an
edge $(v_0,v_i)$ or an edge $(v_i,v_{i'})$ for some index $i'$ with
$i'>i$ and $\xx_{i'}=v_0$.  Thus, $\sum_{i=1}^p \nn_G(V_i)$ is no more
than the number of edges in the resulting contracted simple graph,
which has $|V_0|+p=o(\frac{n}{\ell_k})$ nodes.  Observe that Step~1
does not increase the genus of the embedding. Since the subgraph
induced by $V_i\cup \{v_i\}$ is connected, Step~2 does not increase
the genus of the embedding, either.  The number of edges in the
resulting contracted simple genus-$o(n/\ell_k^2)$ graph is
$o(\frac{n}{\ell_k})$.  Property~S\ref{separation:5} holds for
$\SSS_k$.  The lemma is proved.
\end{proof}

\section{Our compression scheme}
\label{section:compression}
This section proves Theorem~\ref{theorem:theorem1}.

\subsection{Recovery string}

A {\em labeling} of graph $G$ is a one-to-one mapping from $\V(G)$ to
$\{0,1,\ldots,\vv(G)-1\}$. For instance,
Figure~\ref{figure:labeling}(a) shows a labeling for graph $G$.  Let
$G$ be a graph embedded on a surface.  We say that a graph $\Delta$
embedded on the same surface is a {\em triangulation} of $G$ if $G$ is
a subgraph of $\Delta$ with $\V(\Delta)=\V(G)$ such that each face of
$\Delta$ has three nodes.
The following lemma shows an $o(n)$-bit string with which the larger
embedded labeled subgraphs of $G$ can be recovered from smaller
embedded labeled subgraphs of $G$ in $O(n)$ time.

\begin{lemma}
\label{lemma:recovery}
Let $k$ be a positive integer.  Let $G$ be an $n$-node graph embedded
on a genus-$o(\frac{n}{\ell_k})$ surface.  Let $\Delta$ be a
triangulation of $G$.  Let $\SSS_k=[V_0,\ldots,V_p]$ be a given
$k$-separation of $\Delta$ and $\SSS_{k-1}=[U_0,\ldots,U_q]$ be a
given $(k-1)$-separation of $\Delta$ such that $\SSS_k$ is a
refinement of $\SSS_{k-1}$.  For any given labeling $L_{k,i}$ of
$G(V_i)$ for each $i=1,\ldots,p$, the following statements hold.
\begin{enumerate}
\addtolength{\itemsep}{-0.5\baselineskip}
\item 
\label{rec1}
It takes overall $O(n)$ time to compute a labeling $L_{k-1,j}$ of
subgraph $G(U_j)$ for each $j=1,\ldots,q$.

\item 
\label{rec2}
Given the above labelings $L_{k-1,j}$ of subgraphs $G(U_j)$ with
$1\leq j\leq q$, it takes $O(n)$ time to compute an $o(n)$-bit string
$\textit{Rec}_k$ such that $G(U_j)$ and $L_{k-1,j}$ for all
$j=1,\ldots,q$ can be recovered in overall $O(n)$ time from
$\textit{Rec}_k$ and $G(V_i)$ and $L_{k,i}$ for all $i=1,\ldots,p$.
\end{enumerate}
\end{lemma}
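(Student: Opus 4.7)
The plan is to derive the $(k-1)$-level labelings canonically from the given $k$-level labelings, and to package the cross-level structural information as $\textit{Rec}_k$. For Statement~\ref{rec1}, Property~R\ref{refine:2} assigns to each $i\in\{1,\ldots,p\}$ a unique $j(i)\in\{1,\ldots,q\}$ with $V_i\subseteq U_{j(i)}$, and by Property~R\ref{refine:1} together with Property~S\ref{separation:1} for $\SSS_{k-1}$ every node of $V_0\setminus U_0$ lies in exactly one $U_j$ with $j\ge 1$. A single $O(n)$-time scan of $V_0$ and of the $V_i$'s computes $j(i)$ and the assignment of $V_0\setminus U_0$ to the $U_j$'s. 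I then define $L_{k-1,j}$ by the canonical rule: list first the nodes of $V_0\cap U_j$ in their $V_0$-order, then, for each $i$ with $j(i)=j$ in increasing order of $i$, list the internal nodes of $V_i$ in the order given by $L_{k,i}$, and finally list the nodes of $N_G(U_j)\subseteq U_0$ in $V_0$-order. The total work is $O\!\left(\sum_i\vv(G(V_i))+n\right)=O(n)$.

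For Statement~\ref{rec2}, the string $\textit{Rec}_k$ is the concatenation (via Lemma~\ref{lemma:unary-concat}) of four pieces: (a)~a fixed global numbering of $V_0$; (b)~for each $i$, the integers $j(i)$ and $\nn_G(V_i)$ together with the sequence of global names identifying the boundary labels of $L_{k,i}$, so that the decoder can reinterpret the abstract labels of $G(V_i)$ as actual $V_0$ nodes; (c)~for each node of $V_0\setminus U_0$, the index of its host $U_j$; and (d)~the edges of $G[V_0]$ incident to at least one node of $V_0\setminus U_0$, each written as a pair of global names. Given these data, the decoder forms $G(U_j)$ by unioning, under the global naming, the $G(V_i)$'s with $j(i)=j$ along their identified boundary nodes, inserting the nodes of $V_0\cap U_j$ from~(c), and adding the edges of~(d) both of whose endpoints lie in $U_j\cup N_G(U_j)$; the labeling $L_{k-1,j}$ is reproduced by the same canonical rule as in Statement~\ref{rec1}.

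For the bit count, Property~S\ref{separation:3} for $\SSS_k$ gives $p=o(n/\ell_k)$ and $|V_0|=o(n/\ell_k)$, Property~S\ref{separation:5} for $\SSS_k$ gives $\sum_i\nn_G(V_i)\le\sum_i\nn_\Delta(V_i)=o(n/\ell_k)$, and Euler's formula applied to $G[V_0]$ (a subgraph of the genus-$o(n/\ell_k)$ graph $G$) gives $|E(G[V_0])|=O(|V_0|)+o(n/\ell_k)=o(n/\ell_k)$. Each of~(a)--(d) therefore uses $o(n/\ell_k)\cdot O(\log n)$ bits, which is $o(n)$ because $\ell_k\ge\ell_1=\log n$ for every $k\ge 1$. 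The main obstacle is item~(d): edges of $G[V_0]$ incident to $V_0\setminus U_0$ belong to some $G(U_j)$ yet to no $G(V_i)$, so they must be explicitly transferred across levels, and it is exactly the genus hypothesis on $G$ (not just on $\Delta$) that keeps their count within the $o(n)$ budget. The remaining ingredients are routine bookkeeping against Properties~S\ref{separation:3}--S\ref{separation:5} and~R\ref{refine:1}--R\ref{refine:2} of the two separations.
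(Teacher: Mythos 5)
The key quantitative step in your bit-count analysis is wrong. You claim ``$\ell_k\ge\ell_1=\log n$ for every $k\ge 1$,'' but by definition $\ell_k=\max\{1,\log^{(k)}n\}$ is \emph{non-increasing} in $k$: we have $\ell_1=\log n$ and $\ell_2=\log\log n<\ell_1$. Consequently, your pieces (a)--(d), each storing $o(n/\ell_k)$ entries written as global $O(\log n)$-bit names, total $o(n/\ell_k)\cdot O(\log n)$ bits, which for $k=2$ is $o\bigl(n\log n/\log\log n\bigr)$ --- \emph{not} $o(n)$. Since the theorem's encoder invokes this lemma at $k=2$, this is a genuine gap, not a cosmetic slip. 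The paper avoids it precisely by never using global names: every node reference inside $\textit{Rec}_k$ is a \emph{local} label $L_{k-1,j}(v)$, which by Property~S\ref{separation:4} of $\SSS_{k-1}$ satisfies $\vv(G(U_j))=\text{poly}(\ell_{k-1})$ and so costs only $O(\log\text{poly}(\ell_{k-1}))=O(\ell_k)$ bits; similarly the block index $i\in I_j$ is stored as an $O(\ell_k)$-bit offset within $I_j$, which is well-defined because Property~R\ref{refine:3} guarantees $I_j$ is a contiguous range. Thus the per-entry cost scales with $\ell_k$ rather than $\log n$, yielding $o(n/\ell_k)\cdot O(\ell_k)=o(n)$. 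Your item~(c) alone --- a host index of size $O(\log q)$ per node of $V_0\setminus U_0$ --- already exceeds the budget unless you switch to these local/offset encodings.

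A second, smaller discrepancy: the paper's $\textit{Rec}_k$ also recovers the \emph{embedding} of $G(U_j)$, not just the abstract graph, via the string $\textit{Rec}'''_{k,j}$, which records the cyclic order around boundary nodes $v\in W_j$ at the transitions between the pieces $W_j$ and the $V_i$'s. Bounding its length is where the triangulation $\Delta$ is actually used: each such transition pair forces an edge of $\Delta[V_0]$, each such edge is charged at most four times, and $|E(\Delta[V_0])|=o(n/\ell_k)$ by Property~S\ref{separation:3} and the genus bound. Your proposal stores only the edges of $G[V_0]$ incident to $V_0\setminus U_0$, which captures the abstract graph but discards the rotation system; you would need an analogue of $\textit{Rec}'''_{k,j}$ (again with $O(\ell_k)$-bit local labels) to match what the paper proves. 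Your observation about where the genus hypothesis enters is along the right lines, but the paper applies it to $\Delta[V_0]$ rather than to $G[V_0]$, and the decisive use is to bound the number of rotation-order transitions, not merely the number of edges.
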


\begin{figure}[t]
\centerline{\input{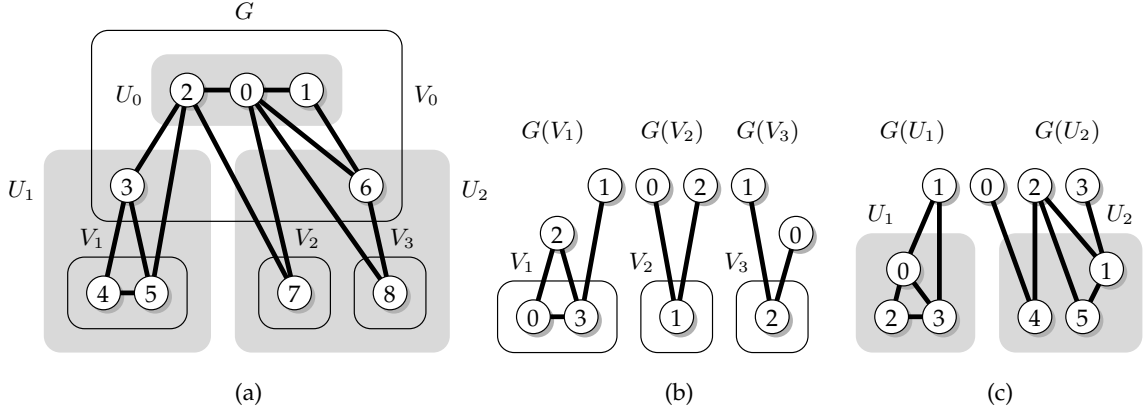}}
\caption{(a) Graph $G$ with a labeling.  (b) Subgraphs $G(V_1)$,
  $G(V_2)$, and $G(V_3)$ of $G$ with labelings.  (c) Subgraphs
  $G(U_1)$ and $G(U_2)$ of $G$ with labelings.}
\label{figure:labeling}
\end{figure}

\begin{proof}
Since $\Delta$ is a subgraph $G$ with $\V(\Delta)=\V(G)$, one can
easily verify that $\SSS_{k-1}$ (respectively, $\SSS_k$) is also a
$(k-1)$-separation (respectively, $k$-separation) of $G$.
For each $j=1,\ldots,q$, let $I_j$ consist of the indices $i$ with
$V_i\subseteq U_j$. Let $W_j$ consist of the nodes of $G(U_j)$ that
are not in any $V_i$ with $i\in I_j$. By
Properties~S\ref{separation:1} and~S\ref{separation:2} of $\SSS_k$,
$W_j\subseteq V_0$.  For instance, if $G$ is as shown in
Figure~\ref{figure:labeling}(a), where $v_t$ with $0\leq t\leq 8$
denotes the node with label~$t$.  We have
$I_1=\{1\}$, $I_2=\{2,3\}$, $W_1=\{v_2,v_3\}$, and
$W_2=\{v_0,v_1,v_2,v_6\}$.  Let the labeling $L_{k-1,j}$ for $G(U_j)$
be defined as follows.
\begin{itemize}
\addtolength{\itemsep}{-0.5\baselineskip}
\item 
For the nodes of $G(U_j)$ in $W_j$, let $L_{k-1,j}$ be an arbitrary
one-to-one mapping from $W_j$ to $\{0,1,\ldots,|W_j|-1\}$. In
Figure~\ref{figure:labeling}(c), we have $L_{k-1,1}(v_2)=1$,
$L_{k-1,1}(v_3)=0$, $L_{k-1,2}(v_0)=2$, $L_{k-1,2}(v_1)=3$,
$L_{k-1,2}(v_2)=0$, and $L_{k-1,2}(v_6)=1$.

\item 
For the nodes of $G(U_j)$ not in $W_j$, let $L_{k-1,j}$ be the
one-to-one mapping from $\bigcup_{i\in I_j} V_i$ to $\{|W_j|, |W_j|+1,
\ldots, \vv(G(U_j))-1\}$ obtained by sorting $(i,L_{k,i}(v))$ for all
indices $i\in I_j$ and all nodes $v\in V_i$ such that
$L_{k-1,j}(v)<L_{k-1,j}(v')$ holds for a node $v$ of $V_i$ and a node
$v'$ of $V_{i'}$ if and only if (a) $i<i'$ or (b) $i=i'$ and
$L_{k,i}(v)<L_{k,i'}(v')$.  For instance, if $L_{k,1}$, $L_{k,2}$, and
$L_{k,3}$ are as shown in Figure~\ref{figure:labeling}(b), then
$L_{k-1,1}$ and $L_{k-1,2}$ can be as shown in
Figure~\ref{figure:labeling}(c) and $L_{k-2,1}$ can be as shown in
Figure~\ref{figure:labeling}(a).
\end{itemize}
It takes $O(\vv(G(U_j)))=O(|U_j|+\nn_G(U_j))$ time to compute
$L_{k-1,j}$ from all $L_{k,i}$ with $i\in I_j$. By
Property~S\ref{separation:5} of~$\SSS_{k-1}$, it takes overall $O(n)$
time to compute all $L_{k-1,j}$ with $1\leq j\leq q$ from all
$L_{k,i}$ with $1\leq i\leq p$.
Statement~\ref{rec1} is proved.

By Property~S\ref{separation:4} of $\SSS_{k-1}$, the label of each
node of $G(U_j)$ assigned by $L_{k-1,j}$ can be represented by
$O(\log\text{poly}(\ell_{k-1}))=O(\ell_k)$ bits.
By Property~S\ref{separation:4} of $\SSS_{k}$, the label of each node
of $G(V_i)$ assigned by $L_{k,i}$ can be represented by
$O(\log\text{poly}(\ell_k))=O(\ell_{k+1})$ bits.
For each index $j=1,\ldots,q$, 
\begin{itemize}
\addtolength{\itemsep}{-0.5\baselineskip}
\item 
string $\textit{Rec}'_{k,j}$ stores the adjacency list of the
embedded subgraph of $G(V_j)$ induced by $W_j$ via the labeling
$L_{k-1,j}$ of $W_j$,
\item 
string $\textit{Rec}''_{k,j}$ stores the information required to
recover $L_{k-1,j}$ from all $L_{k,i}$ with $i\in I_j$, and
\item 
string $\textit{Rec}'''_{k,j}$ stores the information required to
recover the embedding of $G(U_j)$ from the embeddings of all $G(V_i)$
with $i\in I_j$ and the embedding of the subgraph of $G(U_j)$ induced
by $W_j$.
\end{itemize}
By definition of $W_j$, we have $|W_j| = |V_0\cap U_j|+\nn_G(U_j)$. It
follows from Property~S\ref{separation:3} of $\SSS_k$ and
Property~S\ref{separation:5} of $\SSS_{k-1}$ that 
\begin{displaymath}
\sum_{j=1}^q |W_j|\leq |V_0|+\sum_{j=1}^q \nn_G(U_j)=
o\left(\frac{n}{\ell_k}\right)+o\left(\frac{n}{\ell_{k-1}}\right)=o\left(\frac{n}{\ell_k}\right).
\end{displaymath}
Let $W=\bigcup_{j=1}^q W_j$.  Since $G[V_0],G(V_1),\ldots,G(V_p)$ form
a disjoint partition of the edges of $G$, the overall number of edges
in the subgraphs of $G(V_j)$ induced by $W_j$ for all $j=1,\ldots,q$
is no more than the number of edges in $G[W]$, which is
$O(|W|+o(\frac{n}{\ell_k}))\leq O(\sum_{j=1}^q
|W_j|)+o(\frac{n}{\ell_k})=o(\frac{n}{\ell_k})$.  Therefore,
\begin{equation}
\label{eq:1}
\sum_{j=1}^q\|\textit{Rec}'_{k,j}\|
=o\left(\frac{n}{\ell_k}\right)\cdot O(\ell_k) =o(n).
\end{equation}
It suffices for $\textit{Rec}''_{k,j}$ to store the list of
$(i,L_{k,i}(v),L_{k-1,j}(v))$ for all $i\in I_j$ and all $v\in
\N_G(V_i)$.  By Property~R\ref{refine:3} of $\SSS_{k-1}$ and $\SSS_k$
and Property~S\ref{separation:4} of $\SSS_{k-1}$, index $i$ can be
represented by an $O(\ell_k)$-bit offset $t$ such that $i$ is the
$t$-th smallest index in $I_j$.  Thus,
$\|\textit{Rec}''_{k,j}\|=\sum_{i\in I_j} \nn_G(V_i)\cdot O(\ell_k)$.
By Property~S\ref{separation:5} of $\SSS_k$, we have $\sum_{j=1}^q
\sum_{i\in I_j} \nn_G(V_i)= \sum_{i=1}^p \nn_G(V_i)=
o(\frac{n}{\ell_k})$.  Therefore,
\begin{equation}
\label{eq:2}
\sum_{j=1}^q \|\textit{Rec}''_{k,j}\|
=o\left(\frac{n}{\ell_k}\right)\cdot O(\ell_k) = o(n).
\end{equation}
It suffices for $\textit{Rec}'''_{k,j}$ to store the list of
$(L_{k-1,j}(v),L_{k-1,j}(v'),L_{k-1,j}(v''))$ for all pairs of edges
$(v,v')$ and $(v,v'')$ of $G(U_j)$ such that (a) $v''$ is the neighbor
of $v$ that immediately succeeds $v'$ in clockwise order around $v$ in
$G(U_j)$ and (b) nodes $v'$ and $v''$ are not in the same partition of
$\V(G(U_j))$ formed by the $|I_j|+1$ disjoint sets $W_j$ and $V_i$
with $i\in I_j$.  By Property~S\ref{separation:2} of $\SSS_k$, node
$v$ belongs to $W_j\subseteq V_0$. Since $\Delta$ is a triangulation
of $G$, the neighbors of $v$ in $\Delta$ form a cycle that surrounds
$v$ in $\Delta$.  Let $P$ be the path of the cycle from $v'$ to $v''$
in clockwise order around $v$.  At least one node, say, $u$ of $P$
belongs to $V_0$, since otherwise Property~S\ref{separation:2} of
$\SSS_k$ would imply that all nodes of $P$ belong to the same $V_i$
for some $i\in I_j$, contradicting with the choices of $v'$ and $v''$.
Edge $(v,u)$ belongs to $\Delta[V_0]$.
Observe that each edge of $\Delta[V_0]$ can be identified by at most
four such edge pairs $(v,v')$ and $(v,v'')$.  Since the edges of
$G(U_j)$ and $G(U_{j'})$ with $1\leq j\ne j'\leq q$ are disjoint, the
number of edge pairs stored in $\textit{Rec}'''_{k,j}$ is at most four
times the number of edges in $\Delta[V_0]$. By
Property~S\ref{separation:3} of $\SSS_k$ and the fact that $\Delta$
has genus $o(\frac{n}{\ell_k})$, the number of edge pairs stored in
$\textit{Rec}'''$ is $o(\frac{n}{\ell_k})$.  Therefore,
\begin{equation}
\label{eq:3}
\sum_{j=1}^q \|\textit{Rec}'''_{k,j}\|
=o\left(\frac{n}{\ell_k}\right)\cdot O(\ell_k)=o(n).
\end{equation}
Let 
\begin{eqnarray*}
\textit{Rec}'_k&=&\textit{Rec}'_{k,1}\circ \cdots \circ\textit{Rec}'_{k,q},\\
\textit{Rec}''_k&=&\textit{Rec}''_{k,1}\circ \cdots \circ\textit{Rec}''_{k,q},\\
\textit{Rec}'''_k&=&\textit{Rec}'''_{k,1}\circ \cdots \circ\textit{Rec}'''_{k,q},\\
\textit{Rec}_k&=&\textit{Rec}'_k\circ\textit{Rec}''_k\circ\textit{Rec}'''_k.
\end{eqnarray*}
By Equations~(\ref{eq:1}),~(\ref{eq:2}), and~(\ref{eq:3}) and
Lemma~\ref{lemma:unary-concat}, we have $\|\textit{Rec}_k\|=o(n)$. It
takes $O(n)$ time to compute $\textit{Rec}_k$ from all labelings
$L_{k,j}$ and all embedded graphs $G(U_j)$ with $1\leq j\leq q$ and
all labelings $L_{k-1,i}$ and all embedded graphs $G(V_i)$ with $1\leq
i\leq p$.  It also takes $O(n)$ time to recover all labelings
$L_{k,j}$ and all embedded graphs $G(U_j)$ with $1\leq j\leq q$ from
$\textit{Rec}_k$ and all labelings $L_{k-1,i}$ and all embedded graphs
$G(V_i)$ with $1\leq i\leq p$.  Statement~\ref{rec2} holds.  The lemma
is proved.
\end{proof}

\subsection{Proving Theorem~\ref{theorem:theorem1}}
We are ready to prove the main theorem of the paper.
\begin{proof}[Proof of Theorem~\ref{theorem:theorem1}]\
Let $G\in \G$ be the $n$-node input graph embedded on a
genus-$o(\frac{n}{\log^2 n})$ surface.  The encoding algorithm
$\textit{Encode}_A$ performs the following four steps on $G$.
\begin{enumerate}[\em E1:]
\addtolength{\itemsep}{-0.5\baselineskip}

\item 
\label{e1}
Triangulate the embedded graph $G$ into a triangulation $\Delta$ of
$G$.  Let $\SSS_0$ be the $0$-separation $[\varnothing,\V(\Delta)]$ of
$\Delta$. For each $k=1,2$, apply Lemma~\ref{lemma:lemma3.1} to
obtain a $k$-separation $\SSS_k$ of $\Delta$ that is a refinement of
$\SSS_{k-1}$.

\item 
\label{e2}
Let $[V_0,\ldots,V_p]=\SSS_2$.  Apply Lemma~\ref{lemma:basis} with
$\ell=\max_{1\leq i\leq p} \vv(G(V_i))$ to compute
(a)~$\textit{Label}(H)$ and $\textit{Optcode}(H)$ for all distinct
graphs $H$ in class $\G$ with $\vv(H)\leq\ell$ and
(b)~$\textit{Table}(\G,\ell)$.  For each $i=1,\ldots,p$, apply
Lemma~\ref{lemma:basis}(\ref{basis:1}) to compute from
$\textit{Table}(\G,\ell)$ the binary string
$\textit{Code}(V_i)=\textit{Optcode}(G(V_i))$ and the labeling
$L_{2,i}=\textit{Label}(G(V_i))$.

\item 
\label{e3}
For each $k=2,1$, perform the following two substeps.
\begin{enumerate}[\em E\ref{e3}.1:]
\item 
\label{encode:concat}
Let $[U_0,\ldots,U_q]=\SSS_{k-1}$ and $[V_0,\ldots,V_p]=\SSS_k$.  For
each $j=1,\ldots,q$, let binary string
$\textit{Code}(U_j)=\textit{Code}(V_{i_1})\circ \cdots \circ
\textit{Code}(V_{i_2})$, where $\{i_1,i_1+1,\ldots,i_2\}$ are the
indices $i$ with $V_i\subseteq U_j$.

\item 
Apply Lemma~\ref{lemma:recovery}(\ref{rec1}) to obtain the labelings
$L_{k-1,j}$ of subgraphs $G(U_j)$ for all $j=1,\ldots,q$.  Apply
Lemma~\ref{lemma:recovery}(\ref{rec2}) to obtain the $o(n)$-bit binary
string $\textit{Rec}_k$.
\end{enumerate}

\item 
\label{e4}
By $\SSS_0=[\varnothing,\V(G)]$, now we have $\textit{Code}(\V(G))$ (and a
labeling $L_{0,1}$ for $G=G(\V(G))$).  The output binary string
$\textit{Code}_A(G)=
\textit{Code}(\V(G))\circ\textit{Table}(\G,\ell)\circ\textit{Rec}_1\circ
\textit{Rec}_2$.
\end{enumerate}
By Lemma~\ref{lemma:lemma3.1}, Step~E\ref{e1} takes $O(n)$ time.  By
Property~S\ref{separation:5} of $\SSS_2$, we have $\sum_{i=1}^p
\vv(G(V_i))=n+o(n)$.  By Lemma~\ref{lemma:basis}, Step~E\ref{e2} takes
$O(n)$ time.  By Lemmas~\ref{lemma:unary-concat}
and~\ref{lemma:recovery}, Step~E\ref{e3} takes $O(n)$ time.  By
Lemma~\ref{lemma:unary-concat}, Step~E\ref{e4} takes $O(n)$ time.
Therefore, the encoding algorithm $\textit{Encode}_A(G)$ runs in
$O(n)$ time. Condition~C\ref{condition:c1} holds.

\bigskip
\noindent
The decoding algorithm $\textit{Decode}_A$ performs the following five
steps on $\textit{Code}_A(G)$.
\begin{enumerate}[\em D1:]
\addtolength{\itemsep}{-0.5\baselineskip}
\item 
\label{d1}
Obtain $\textit{Code}(\V(G))$, $\textit{Table}(\G,\ell)$,
$\textit{Rec}_1$, and $\textit{Rec}_2$ from $\textit{Code}_A(G)$.

\item 
\label{d2}
Let $\SSS_0=[\varnothing,\V(G)]$.  For each $k=1,2$, perform the
following substep.
\begin{enumerate}[\em D2.1:]
\item 
Let $[U_0,\ldots,U_q]=\SSS_{k-1}$ and $[V_0,\ldots,V_p]=\SSS_k$.  For
each $j=1,\ldots,q$, obtain all $\textit{Code}(V_i)$ with
$V_i\subseteq U_j$ from $\textit{Code}(U_j)$.
\end{enumerate}

\item 
\label{d3}
Let $[V_0,\ldots,V_p]=\SSS_2$.  
Apply Lemma~\ref{lemma:basis}(\ref{basis:2}) to obtain $G(V_i)$ and
$L_{2,i}=\textit{Label}(G(V_i))$ from
$\textit{Code}(V_i)=\textit{Optcode}(G(V_i))$ and
$\textit{Table}(\G,\ell)$ for each $i=1,\ldots,p$.

\item 
\label{d4}
For each $k=2,1$, perform the following substep.
\begin{enumerate}[\em D4.1:]
\item 
Let $[U_0,\ldots,U_q]=\SSS_{k-1}$ and $[V_0,\ldots,V_p]=\SSS_k$.
Apply Lemma~\ref{lemma:recovery}(\ref{rec2}) to recover $G(U_j)$ and
$L_{k-1,j}$ with $1\leq j\leq q$ from $G(V_i)$ and $L_{k,i}$ with
$1\leq i\leq p$ and $\textit{Rec}_k$.
\end{enumerate}

\item 
\label{d5}
Output $G=G(\V(G))$.
\end{enumerate}
By Lemma~\ref{lemma:unary-concat}, Step~D\ref{d1} takes $O(n)$ time.
By Lemma~\ref{lemma:unary-concat}, Step~D\ref{d2} takes $O(n)$ time.
By
Property~S\ref{separation:5} of $\SSS_2$, we have $\sum_{i=1}^p
\vv(G(V_i))=n+o(n)$.  
By Lemma~\ref{lemma:basis}(\ref{basis:2}), Step~D\ref{d3} takes $O(n)$
time.
By Lemma~\ref{lemma:recovery}(\ref{rec2}), Step~D\ref{d4} takes $O(n)$
time.  Therefore, the decoding algorithm $\textit{Decode}_A(G)$ runs
in $O(n)$ time.  Condition~C\ref{condition:c2} holds.  By
$\SSS_0=[\varnothing, \V(G)]$, graph $G=G(\V(G))$ is correctly
recovered from $\textit{Code}_A(G)$ at the end of Step~D4. Therefore,
$A$ is a compression scheme for $\G$.

\bigskip
\noindent
To show Condition~C\ref{condition:c3}, we first prove the following
claim for each $k=1,2$.
\begin{claim}
\label{claim:claim1}
Suppose that $[U_0,\ldots,U_q]=\SSS_{k-1}$ and
$[V_0,\ldots,V_p]=\SSS_k$.  If $\sum_{i=1}^p\|\textit{Code}(V_i)\|\leq
\beta n+o(n)$ and
$\|\textit{Code}(V_i)\|=\text{poly}(\ell_k)$ holds for each
$i=1,\ldots,p$, then $\sum_{j=1}^q\|\textit{Code}(U_j)\|\leq
\beta n+o(n)$ and
$\|\textit{Code}(U_j)\|=\text{poly}(\ell_{k-1})$ holds for each
$j=1,\ldots,q$.
\end{claim}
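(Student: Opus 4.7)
The plan is to unfold the recursive definition of $\textit{Code}(U_j)$ from Step~E\ref{e3}.\ref{encode:concat} and bound both the per-$U_j$ length and the total length using the properties of $\SSS_{k-1}$ and $\SSS_k$, together with Lemma~\ref{lemma:unary-concat}.

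First I would fix a $U_j$ and let $d_j$ be the number of indices $i$ with $V_i\subseteq U_j$, and let $m_j=\sum_{V_i\subseteq U_j}\|\textit{Code}(V_i)\|$. By Lemma~\ref{lemma:unary-concat} applied in Step~E\ref{e3}.\ref{encode:concat}, one has $\|\textit{Code}(U_j)\|=m_j+O(\min\{m_j,\ d_j\log m_j\})$. To bound $d_j$, I would note that each $V_i\subseteq U_j$ contributes at least one node, so $d_j\leq|U_j|=\text{poly}(\ell_{k-1})$ by Property~S\ref{separation:4} of $\SSS_{k-1}$. Combining this with the hypothesis $\|\textit{Code}(V_i)\|=\text{poly}(\ell_k)$ yields $m_j=\text{poly}(\ell_{k-1})$, hence $\log m_j=O(\ell_k)$, so the segmentation-prefix overhead for $U_j$ is $O(d_j\ell_k)=\text{poly}(\ell_{k-1})$. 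This gives $\|\textit{Code}(U_j)\|=\text{poly}(\ell_{k-1})$, settling the second assertion of the claim.

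For the sum, I would use Property~R\ref{refine:2} together with the disjointness guaranteed by Property~S\ref{separation:1} of $\SSS_{k-1}$: each $V_i$ lies in exactly one $U_j$, so $\sum_{j=1}^q m_j=\sum_{i=1}^p\|\textit{Code}(V_i)\|\leq\beta n+o(n)$. Summing the segmentation-prefix overheads gives $\sum_{j=1}^q O(d_j\ell_k)=O(p\ell_k)$, and by Property~S\ref{separation:3} of $\SSS_k$ this is $O((o(n/\ell_k)+1)\ell_k)=o(n)$. Adding the two contributions yields $\sum_{j=1}^q\|\textit{Code}(U_j)\|\leq\beta n+o(n)$.

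The main obstacle is ensuring the segmentation-prefix overhead genuinely fits into the $o(n)$ slack: this requires using Property~S\ref{separation:4} of $\SSS_{k-1}$ \emph{and} the hypothesis $\|\textit{Code}(V_i)\|=\text{poly}(\ell_k)$ simultaneously to keep $\log m_j$ down to $O(\ell_k)$, which is precisely what lets the product $p\ell_k$ absorb into $o(n)$ via Property~S\ref{separation:3} of $\SSS_k$. Once this book-keeping is lined up, everything else is a straightforward application of the separation and refinement properties.
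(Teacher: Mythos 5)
Your proof is correct and follows essentially the same route as the paper's: you bound $d_j=|I_j|\leq|U_j|=\text{poly}(\ell_{k-1})$ via Property~S\ref{separation:4} of $\SSS_{k-1}$, use the hypothesis on $\|\textit{Code}(V_i)\|$ to get $\log m_j=O(\ell_k)$, and then control the total segmentation overhead $O(p\ell_k)=o(n)$ via Property~S\ref{separation:3} of $\SSS_k$. The only cosmetic difference is that you explicitly invoke Properties~R\ref{refine:2} and~S\ref{separation:1} to justify $\sum_j m_j=\sum_i\|\textit{Code}(V_i)\|$, which the paper leaves implicit.
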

\begin{proof}[Proof of Claim~\ref{claim:claim1}]
For each $j=1,2,\ldots,q$, let $I_j$ consist of the indices $i$ with
$V_i\subseteq U_j$.  
By Property~S\ref{separation:4} of $\SSS_{k-1}$, we have $|I_j|\leq
|U_j|=\text{poly}(\ell_{k-1})$. Therefore, $\sum_{i\in
  I_j}\|\textit{Code}(V_i)\| = \text{poly}(\ell_{k-1})$, implying
$O(\log \sum_{i\in I_j}\|\textit{Code}(V_i)\|)= O(\ell_k)$. By
Property~S\ref{separation:3} of $\SSS_k$, $\sum_{j=1}^q |I_j| = p =
O(\frac{n}{\ell_k})$.  By Lemma~\ref{lemma:unary-concat}, we have
\begin{displaymath}
\sum_{j=1}^q \|\textit{Code}(U_j)\| =\sum_{j=1}^q O(|I_j|\cdot
\ell_k)+\sum_{i=1}^p \|\textit{Code}(V_i)\| \leq \beta n+o(n).
\end{displaymath}
We also have 
\begin{displaymath}
\|\textit{Code}(U_j)\| =O(|I_j|\cdot \ell_k)+ \sum_{i\in I_j}
\|\textit{Code}(V_i)\| =\text{poly}(\ell_{k-1}).
\end{displaymath}
The claim is proved.
\end{proof}

Let $[V_0,\ldots,V_p]=\SSS_2$. For each $i=1,\ldots,p$, let
$n_i=\vv(G(V_i))=|V_i|+\nn_G(V_i)$.  By Property~S\ref{separation:5}
of $\SSS_2$, we have $\sum_{i=1}^p n_i=n+o(n)$.  By Step~E\ref{e2}
of~$\textit{Encode}_A(G)$ and Lemma~\ref{lemma:basis}, we have
$\|\textit{Code}(V_i)\| = \|\textit{Optcode}(G(V_i))\| = \lceil\log
\textit{num}(\G,n_i)\rceil \leq \beta n_i+o(n_i)$.  By
Property~S\ref{separation:4} of~$\SSS_2$ and the assumption that
$\log\textit{num}(\G,n)=O(n)$,
\begin{equation}
\label{eq4}
\|\textit{Code}(V_i)\|=\text{poly}(\ell_2)
\end{equation}
holds for each $i=1,2,\ldots,p$.  
We have
\begin{equation}
\label{eq5}
\sum_{i=1}^p \|\textit{Code}(V_i)\| \leq\sum_{i=1}^p \left(\beta
n_i+o(n_i)\right)=\beta\cdot (n+o(n))+o(\beta n+o(n)) =\beta n+o(n).
\end{equation}
Combining Equations~(\ref{eq4}) and~(\ref{eq5}),
Claim~\ref{claim:claim1} for $k=2,1$, and $\SSS_0=[\varnothing,
  \V(G)]$, we have $\|\textit{Code}(\V(G))\|\leq \beta n+o(n)$.  By
Lemma~\ref{lemma:unary-concat} and
$\|\textit{Table}(\G,\ell)\|+\|\textit{Rec}_1\|+\|\textit{Rec}_2\|
=o(n)$, we have $\|\textit{Code}_A(G)\|\leq \beta n+o(n)$.
Condition~C\ref{condition:c3} holds.  The theorem is proved.
\end{proof}

\section{Extension}
\label{section:extend}

This section proves Theorem~\ref{theorem:theorem2}.  The only place in
our proof of Theorem~\ref{theorem:theorem1} requiring $\G$ to be
hereditary is Step~E\ref{e2}: We need $G(V_i)\in \G$ so that
$\textit{Optcode}(G(V_i))$ and $\textit{Label}(G(V_i))$ can be
obtained from $\textit{Table}(\G,\ell)$. For a non-hereditary class
$\G$, we can substitute $G(V_i)$ by a graph $H_i\in \G$ that is close
to $G(V_i)$ for each $i=1,2,\ldots,p$ as long as the overall number of
bits required to encode the overall difference between $G(V_i)$ and
$H_i$ is $o(n)$. The following corollary is an example of such an
extension.

\begin{corollary}
\label{corollary:corollary1}
Let $\G$ be a class of graphs satisfying $\log
\textit{num}(\G,n)=O(n)$ and that any input $n$-node graph $G\in \G$
to be encoded comes with a genus-$o(\frac{n}{\log^2 n})$ embedding.
If for any $2$-separation $[V_0,\ldots,V_p]$ of any graph $G\in \G$,
there exist graphs $H_1,\ldots,H_p$ in $\G$ such that
each $G(V_i)$ with $1\leq i\leq p$ can be obtained from $H_i$ by first
deleting $O(\nn_G(V_i))$ nodes (together with their incident edges)
and then updating (adding or deleting) $O(\nn_G(V_i))$ edges,
then $\G$ admits an optimal compression scheme.
\end{corollary}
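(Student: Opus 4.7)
The plan is to follow the proof of Theorem~\ref{theorem:theorem1} verbatim, amending only the encoding step E\ref{e2} and the decoding step D\ref{d3}. All other machinery -- the separation computation of Section~\ref{section:separation}, the recovery lemma (Lemma~\ref{lemma:recovery}), the segmentation-prefix concatenation, and Claim~\ref{claim:claim1} -- operates only on the embedded subgraphs $G(V_i)$ and their labelings and is agnostic to whether these subgraphs themselves lie in $\G$. So it is reused unchanged.

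Given the $2$-separation $\SSS_2=[V_0,\ldots,V_p]$ produced in Step E\ref{e1}, invoke the hypothesis to obtain graphs $H_1,\ldots,H_p\in\G$ with $\vv(H_i)=\vv(G(V_i))+O(\nn_G(V_i))=\text{poly}(\ell_2)$. In the amended Step E\ref{e2}, apply Lemma~\ref{lemma:basis} with $\ell=\max_i \vv(H_i)$ and define
\[
\textit{Code}(V_i) \;=\; \textit{Optcode}(H_i)\circ\textit{Patch}_i,
\]
where $\textit{Patch}_i$ records (i)~the $O(\nn_G(V_i))$ nodes of $H_i$ to be deleted, (ii)~the $O(\nn_G(V_i))$ edges to be added or removed (each endpoint named by its $\textit{Label}(H_i)$), and (iii)~the bijection between the surviving nodes and $\{0,\ldots,\vv(G(V_i))-1\}$ together with whatever combinatorial embedding data are needed to realize $G(V_i)$ locally on the surface. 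Each label fits in $O(\log\ell)=O(\ell_3)$ bits, so $\|\textit{Patch}_i\|=O(\nn_G(V_i)\cdot\ell_3)$. The amended Step D\ref{d3} recovers $H_i$ and $\textit{Label}(H_i)$ from $\textit{Optcode}(H_i)$ by Lemma~\ref{lemma:basis}(\ref{basis:2}) and applies $\textit{Patch}_i$ in time $O(\vv(H_i))$ to reconstruct $G(V_i)$ and $L_{2,i}$; every subsequent step is transplanted verbatim.

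For Condition~C\ref{condition:c3}, put $n_i=\vv(H_i)$. Since $\sum_i\vv(G(V_i))=n+o(n)$ by Property~S\ref{separation:5} of $\SSS_2$ and $\sum_i O(\nn_G(V_i))=o(n/\ell_2)=o(n)$, we still have $\sum_i n_i = n+o(n)$, so
\[
\sum_{i=1}^p \|\textit{Optcode}(H_i)\| \;=\; \sum_{i=1}^p \bigl\lceil\log\textit{num}(\G,n_i)\bigr\rceil \;\leq\; \beta n+o(n).
\]
Because $\ell_3=o(\ell_2)$, the patch overhead satisfies
\[
\sum_{i=1}^p \|\textit{Patch}_i\| \;=\; O(\ell_3)\cdot o\!\left(\frac{n}{\ell_2}\right) \;=\; o(n).
\]
Claim~\ref{claim:claim1} then propagates these bounds through $\SSS_1$ and $\SSS_0$, and the remaining $o(n)$ contributions from $\textit{Table}(\G,\ell)$, $\textit{Rec}_1$, and $\textit{Rec}_2$ are bounded exactly as in the hereditary case.

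The main obstacle is packaging $\textit{Patch}_i$ so that it carries all the information needed by the downstream reconstruction -- the edit list, the label alignment with $V_i\cup\N_G(V_i)$ used by Lemma~\ref{lemma:recovery}, and any local embedding data -- within the stated $O(\nn_G(V_i)\cdot\ell_3)$-bit budget. The asymptotic slack $\ell_3=o(\ell_2)$, combined with the hypothesis's $O(\nn_G(V_i))$ edit count, is exactly what absorbs this overhead into the lower-order term; a weaker hypothesis allowing $\omega(\nn_G(V_i))$ edits, or a class for which label specification already required $\Omega(\ell_2)$ bits, would break the argument at precisely this point.
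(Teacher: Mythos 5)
The overall architecture matches the paper's: revise the two steps of $\textit{Encode}_A$ and $\textit{Decode}_A$ that touch the precomputation table, encode $H_i$ instead of $G(V_i)$, tack on a bounded-size patch string, and verify that the patch overhead sums to $o(n)$. The only cosmetic difference from the paper is that you interleave $\textit{Patch}_i$ inside $\textit{Code}(V_i)$ so that it rides through Claim~\ref{claim:claim1}'s recursion, whereas the paper keeps a single top-level string $\textit{Fix}=\textit{Fix}_1\circ\cdots\circ\textit{Fix}_p$ alongside $\textit{Table}(\G,\ell)$, $\textit{Rec}_1$, $\textit{Rec}_2$; both routings work.

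However, there is a gap in how you account for item~(iii) of $\textit{Patch}_i$. You list ``the bijection between the surviving nodes and $\{0,\ldots,\vv(G(V_i))-1\}$'' as something $\textit{Patch}_i$ records, yet your claimed bound $\|\textit{Patch}_i\|=O(\nn_G(V_i)\cdot\ell_3)$ only accounts for items~(i) and~(ii), each of which involves $O(\nn_G(V_i))$ labels. A literal encoding of the bijection involves $\vv(G(V_i))=\text{poly}(\ell_2)$ labels, and since $\sum_i \vv(G(V_i))=n+o(n)$, this would contribute $\Theta(n\cdot\ell_3)=\omega(n)$ bits -- the ``asymptotic slack $\ell_3=o(\ell_2)$'' you invoke in your final paragraph does not rescue this, because the number of labels, not just the label width, is the problem. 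The fix, which is exactly what the paper does, is that the relabeling need not be stored at all: define $L_{2,i}$ canonically from $L'_{2,i}=\textit{Label}(H_i)$ by order-preservation (if $L'_{2,i}(v)<L'_{2,i}(v')$ for surviving nodes $v,v'$, then $L_{2,i}(v)<L_{2,i}(v')$), so both encoder and decoder reconstruct $L_{2,i}$ for free once they know which nodes survive. Since Lemma~\ref{lemma:recovery} accepts any labeling $L_{2,i}$ of $G(V_i)$, this choice is legitimate. With item~(iii) thus removed from the patch, your bound holds and the argument goes through.
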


\begin{proof}
We revise algorithm $\textit{Encode}_A$ by updating Steps~E\ref{e2}
and~E\ref{e4} as follows.
\begin{enumerate}[\em E1':]
\addtolength{\itemsep}{-0.5\baselineskip}
\setcounter{enumi}{1}
\item 
\label{step:e2-prime}
Let $[V_0,\ldots,V_p]=\SSS_2$.  Compute $H_1,\ldots,H_p$ from
$G(V_1),\ldots,G(V_p)$.  Apply Lemma~\ref{lemma:basis} with
$\ell=\max_{1\leq i\leq p} \vv(H_i)$ to compute (a)
$\textit{Label}(H)$ and $\textit{Optcode}(H)$ for each distinct graph
$H\in \G$ with $\vv(H)\leq\ell$ and (b) $\textit{Table}(\G,\ell)$.
Apply Lemma~\ref{lemma:basis}(\ref{basis:1}) to compute
$\textit{Code}(V_i)=\textit{Optcode}(H_i)$ and
$L'_{2,i}=\textit{Label}(H_i)$ from $\textit{Table}(\G,\ell)$ for all
indices $i=1,\ldots,p$.  Let $L_{2,i}$ be the labeling of $G(V_i)$
obtained from the labeling $L'_{2,i}$ of $H_i$ such that if $v$ and
$v'$ are two distinct nodes of $G(V_i)$ with
$L'_{2,i}(v)<L'_{2,i}(v')$, then we have $L_{2,i}(v)<L_{2,i}(v')$.
Let $\textit{Fix}_i$ be the binary string storing the difference
between $H_i$ and $G(V_i)$ via labeling $L'_{2,i}$.  Let
$\textit{Fix}=\textit{Fix}_1\circ\cdots\circ\textit{Fix}_p$.

\setcounter{enumi}{3}
\item 
\label{step:e4-prime}
By $\SSS_0=[\varnothing,\V(G)]$, now we have $\textit{Code}(\V(G))$
(and a labeling $L_{0,1}$ for $G=G(\V(G))$).  The output binary string
$\textit{Code}_A(G)$ for $G$ is
$\textit{Code}(\V(G))\circ\textit{Table}(\G,\ell)\circ\textit{Rec}_1\circ
\textit{Rec}_2\circ \textit{Fix}$.
\end{enumerate}
By $O(1)^{\text{poly}(\ell)}=o(n)$, it takes $o(n)$ time to compute an
$o(n)$-bit string $\textit{Table}'$ such that graphs $H_1,\ldots,H_p$
satisfying the above conditions can be obtained from
$G(V_1),\ldots,G(V_p)$ and $\textit{Table}'$ in $O(n)$ time.  By
Property~S\ref{separation:5} of $\SSS_2$ and the conditions of
$H_1,\ldots,H_p$, we have $\sum_{i=1}^p \vv(G(V_i))\leq \sum_{i=1}^p
\vv(H_i)=n+o(n)$.  By Lemmas~\ref{lemma:unary-concat}
and~\ref{lemma:basis}, Step~E\ref{step:e2-prime}' takes $O(n)$ time.
By Lemma~\ref{lemma:unary-concat}, Step~E\ref{step:e4-prime}' takes
$O(n)$ time.  Therefore, Condition~C\ref{condition:c1} holds for the
revised $\textit{Encode}_A$.
We revise algorithm $\textit{Decode}_A$ by updating Steps~D\ref{d1}
and~D\ref{d3} as follows.
\begin{enumerate}[\em D1':]
\addtolength{\itemsep}{-0.5\baselineskip}
\item 
Obtain $\textit{Code}(G)$, $\textit{Table}(\G,\ell)$,
$\textit{Rec}_1$, $\textit{Rec}_2$, and~$\textit{Fix}$ from
$\textit{Code}_A(G)$.

\setcounter{enumi}{2}
\item 
Let $[V_0,\ldots,V_p]=\SSS_2$.  Apply
Lemma~\ref{lemma:basis}(\ref{basis:2}) to obtain $H_i$ and
$L'_{2,i}=\textit{Label}(H_i)$ from
$\textit{Code}(V_i)=\textit{Optcode}(H_i)$ and
$\textit{Table}(\G,\ell)$ for each $i=1,\ldots,p$.  Apply
Lemma~\ref{lemma:unary-concat} to obtain all $\textit{Fix}_i$ with
$1\leq i\leq p$ from $\textit{Fix}$.  Obtain $G(V_i)$ and $L_{2,i}$
from $\textit{Fix}_i$, $H_i$, and $L'_{2,i}$ for all $i=1,2,\ldots,p$.
\end{enumerate}
Both revised steps take $O(n)$ time.  Condition~C\ref{condition:c2}
holds for the revised $\textit{Decode}_A$.  Subgraph $G(V_i)$ can be
obtained from $H_i$ by first deleting $O(\nn_G(V_i))$ nodes (and their
incident edges) and then updating $O(\nn_G(V_i))$ edges.  It follows
from Property~S\ref{separation:4} of $\SSS_2$ that
$\|\textit{Fix}_i\|=O(\nn_G(V_i)\cdot \ell_2)$.  By
Property~S\ref{separation:5} of $\SSS_2$, we have
$\sum_{i=1}^p\|\textit{Fix}_i\|=o(\frac{n}{\ell_2})\cdot
O(\ell_2)=o(n)$. By Lemma~\ref{lemma:unary-concat}, we have
$\|\textit{Fix}\|=o(n)$.  Condition~C\ref{condition:c3} holds the
revised $\textit{Code}_A(G)$.  The corollary is proved.
\end{proof}

We use 
Corollary~\ref{corollary:corollary1} to prove
Theorem~\ref{theorem:theorem2}.
\begin{proof}[Proof of Theorem~\ref{theorem:theorem2}]

\begin{figure}[t]
\centerline{\input{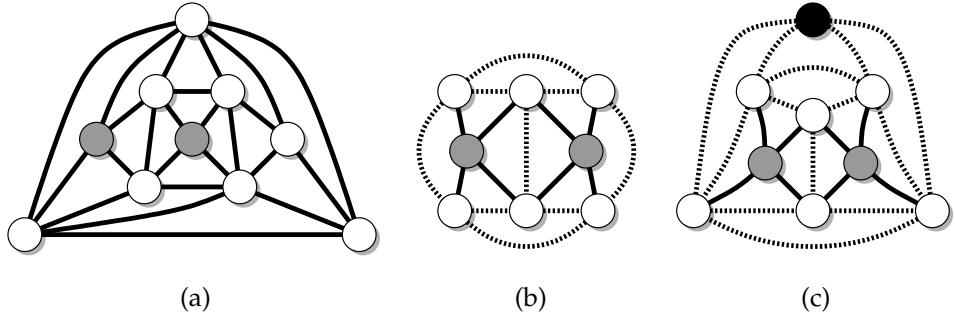}}
\caption{(a) A plane triangulation $\Delta$, where $V_i$ consists of
  the gray nodes.  (b) The subgraph $\Delta[V_i]$, where the dotted
  edges are those in $\Delta[V_i]\setminus \Delta(V_i)$. (c) A plane
  triangulation $H_i$ obtained by adding the dark node $v_F$ and four
  edges in the external face $F$ of $H_i$.}
\label{figure:triangulation-augment}
\end{figure}

Let $\Delta$ be an $n$-node triangulation of a genus-$g$ surface.  Let
$[V_0,\ldots,V_p]$ be a $2$-separation of $\Delta$.  Let $\F$ consist
of the non-triangle faces of $\Delta[V_i]$.  Let $H_i$ be the plane
triangulation obtained from $\Delta[V_i]$ by performing the following
two steps for each face $F\in \F$: (1) Add a node $v_F$ in $F$. (2)
For each node $u$ on the boundary of $F$, add an edge $(u,v_F)$.  See
Figure~\ref{figure:triangulation-augment} for an illustration.  Since
$\Delta$ is a triangulation, the boundary of $F$ contains at least two
nodes $u$ with $\N_\Delta(u)\not\subseteq \V(\Delta(V_i))$.
Therefore, at least two nodes of $\N(V_i)$ belong to the boundary of
$F$.  Let $e_F$ be an edge between two arbitrary nodes of $\N(V_i)$
that belong to the boundary of $F$.  The union of $e_F$ over all faces
$F\in \F$ has genus no more than $g=O(1)$. Therefore, the number of
added nodes to triangulate $\Delta[V_i]$ is $O(\nn_\Delta(V_i))$.  The
number of edges in $\Delta[V_i]\setminus \Delta(V_i)$ is also
$O(\nn_\Delta(V_i))$. Thus, $\Delta(V_i)$ can be obtained from $H_i$
by first deleting $O(\nn_\Delta(V_i))$ nodes together with their
incident edges and then deleting $O(\nn_\Delta(V_i))$ edges.  By
Corollary~\ref{corollary:corollary1},
Statement~\ref{thm2:triangulation} is proved.

\begin{figure}[t]
\centerline{\input{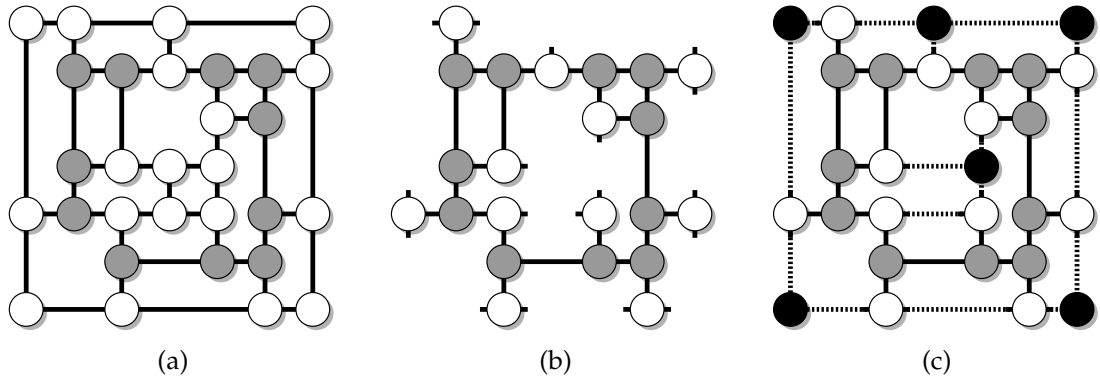}}
\caption{(a) A floorplan $G$, where $V_i$ consists of the gray nodes.
  (b) The subgraph $G(V_i)$. (c) A floorplan $H_i$ obtained from
  $G(V_i)$ by adding $O(\nn_G(V_i))$ nodes and edges.}
\label{figure:floorplan-augment}
\end{figure}

Let $G$ be an $n$-node floorplan.  Since each node of $G$ has at most
three neighbors in $G$, one can easily obtain a floorplan $H_i$ from
$G(V_i)$ by adding $O(\nn_G(V_i))$ nodes and edges. See
Figure~\ref{figure:floorplan-augment} for an example.
Statement~\ref{thm2:floorplan} follows from Corollary~\ref{corollary:corollary1}.
\end{proof}

\section{Concluding remarks}
\label{section:conclude}
Our optimal compression schemes rely on a linear-time obtainable
embedding. Can this requirement be dropped?  It would be of interest
to extend our compression schemes to support efficient queries and
updates.  We leave open the problems of obtaining optimal compression
schemes for $O(1)$-connected genus-$O(1)$ graphs and 3D
floorplans~\cite{CuestaAHPAM10,LiHZBYP06,SridharanDSXN09,LiMH09,WangYC09,WangZGGC08,LiHZBYPC06,CongM10}.

\small
\bibliographystyle{abbrv}
\bibliography{soda}
\end{document}